\documentclass[letterpaper,10 pt,conference]{ieeeconf}
\IEEEoverridecommandlockouts 
\overrideIEEEmargins 
\usepackage{amsmath,amssymb,amsfonts}
\usepackage{graphicx}
\usepackage{subfigure}
\usepackage{textcomp}
\usepackage{color}
\usepackage{xcolor}
\usepackage{multirow}
\usepackage{multicol,lipsum}
\setcounter{tocdepth}{3}
\usepackage{epsfig}
\usepackage{algorithm}
\usepackage{algorithmic}
\usepackage{multirow}
\usepackage{hyperref}
\usepackage{mathtools}
\usepackage{bm}
\usepackage{listings}
\newtheorem{remark}{Remark}
\newtheorem{definition}{Definition}
\newtheorem{theorem}{Theorem}
\newtheorem{problem}{Problem}

\newtheorem{lemma}{Lemma}
\newtheorem{example}{Example}

\newtheorem{assumption}{Assumption}
\usepackage{array}
\usepackage[short]{optidef}
\usepackage{mdframed}
\mdfdefinestyle{mystyle}{
    backgroundcolor=white!10}

\newcommand{\oomit}[1]{}

\newcolumntype{M}[1]{>{\centering\arraybackslash}m{#1}}
\newcolumntype{N}{@{}m{0pt}@{}}

\begin{document}

\title{Reach-avoid Analysis for Sampled-data Systems with Measurement Uncertainties}

\author{Taoran Wu$^{1,2}$, Dejin Ren$^{1,2}$, Shuyuan Zhang$^{3}$, Lei Wang$^{3}$, and Bai Xue$^{1,2}$
\thanks{1. State Key Lab. of Computer Science, Institute of Software, CAS, Beijing, China. Email: \{rendj,wutr,xuebai\}@ios.ac.cn
}
\thanks{2. University of Chinese Academy of Sciences, Beijing, China}
\thanks{3.  Beihang University, Beijing, China. Email: \{zhshuyuan,lwang\}@buaa.edu.cn
}
}
        
%

\maketitle
\thispagestyle{empty}
\pagestyle{empty}

\begin{abstract}
Digital control has become increasingly prevalent in modern systems, making continuous-time plants controlled by discrete-time (digital) controllers ubiquitous and crucial across industries, including aerospace, automotive, and manufacturing. This paper focuses on investigating the reach-avoid problem in such systems, where the objective is to reach a goal set while avoiding unsafe states, especially in the presence of state measurement uncertainties. We propose an approach that builds upon the concept of exponential control guidance-barrier functions, originally used for synthesizing continuous-time feedback controllers. We introduce a sufficient condition that, if met by a given continuous-time feedback controller, ensures the safe guidance of the system into the goal set in its sampled-data implementation, despite state measurement uncertainties. The event of reaching the goal set is determined based on state measurements obtained at the sampling time instants. Numerical examples are provided to demonstrate the validity of our theoretical developments, showcasing successful implementation in solving the reach-avoid problem in sampled-data systems with state measurement uncertainties.

\end{abstract}


\section{Introduction}
\label{sec:intro}
The problem of guiding a dynamical system towards a desired set while avoiding unsafe states is known as reach-avoid analysis. This problem is crucial in domains such as robot motion planning as well as safety and performance critical control. To address this problem, various approaches have been proposed to synthesize reliable continuous-time feedback controllers \cite{margellos2011hamilton, majumdar2014convex, fan2018controller, xue2023reach_ode}. However, despite the theoretical soundness of continuous-time controller designs, implementing them on real systems can be challenging due to the continuous updates required. As a result, in practical applications, the continuous-time controller is often discretized and implemented using discrete-time sampling techniques \cite{ackermann2012sampled}, leading to the creation of sampled-data control systems. The analysis of these control systems is complicated by the presence of both continuous and discontinuous (discrete) components in the control loop.

Sampled-data control systems involve the periodic or aperiodic measurement of the state evolution of a continuous-time plant, with a constant control signal applied until the next sample time \cite{chen2012optimal}. These systems offer various advantages over continuous-time control systems \cite{buso2015digital}, such as improved precision and fault tolerance, making them widely adopted in industries ranging from robotics and process control to automotive systems and industrial automation. However, they have a limitation in that they do not consider the states that the plant transitions through between sample times, as the system state is only observable at each sampling time. This limitation becomes particularly significant for systems with fast dynamics or unstable behavior. Additionally, state measurements in sampled-data control systems are prone to imperfections, which can lead to inaccurate state estimation. Such inaccuracies have the potential to degrade system performance and even violate reach-avoid properties if not adequately addressed in the design of the continuous-time controller. Thus, it is crucial to account for the impact of state estimation errors and the system behavior between sampling times when developing the continuous-time controller, especially for safety and performance critical systems.

\oomit{On the other hand, model predictive control (MPC) is a highly popular method for synthesizing sampled-data controllers to address optimal control problems in real-time scenarios. MPC stands out due to its ability to explicitly handle complex state and control constraints while optimizing performance criteria \cite{grune2007model, sopasakis2013mpc, geromel2021sampled}. It has found wide-ranging applications in various fields such as process control, power grids, transportation, robotics, and manufacturing. At each sampling instant, MPC solves a finite horizon optimal control problem and applies a portion of the optimal solution as the control input to the plant. However, to ensure stability and recursive feasibility \cite{rawlings2017model}, appropriate terminal conditions and additional state constraints, such as control barrier functions, need to be formulated. This complicates the design of MPC and incurs a significant computational burden for generating controllers online.}

%


This paper explores the reach-avoid problem associated with sampled-data control systems, considering the presence of uncertainties in state measurements. The analysis focuses specifically on periodic measurements. Leveraging exponential control guidance-barrier functions \cite{xue2023reach}, our aim is to establish a sufficient condition that ensures the safety of a given Lipschitz feedback control in its sampled-data implementation, with the objective of driving the system towards a desired goal set. The attainment of goal reachability relies on state measurements taken at sampling instants. To achieve this, we scrutinize the dynamic disparities between the closed-loop system operating with the feedback controller and its corresponding sampled-data control system, while accounting for uncertainties in state measurements. Lastly, we present illustrative examples to showcase the theoretical advancements of our proposed method.

This paper makes two main contributions. Firstly, it investigates the dynamics discrepancy between a control system with a Lipschitz feedback controller and its corresponding sampled-data control system which incorporates state measurement uncertainties. Secondly, building upon this investigation and utilizing exponential control guidance-barrier functions, it proposes a sufficient condition to ensure the satisfaction of reach-avoid specifications for the sampled-data control system. By addressing these contributions, the paper sheds light on the controller design of sampled-data control systems using the emulation-based approach (i.e., approximating an available continuous-time controllers) \cite{chen2012optimal} and presents practical solutions for fulfilling reach-avoid specifications.




\subsection{Related Work}
This subsection reviews works which are closely related to the present one.  

In the past two decades, extensive research has been conducted on sampled-data systems \cite{hetel2017recent,verdier2022formal}. However, the synthesis of safety and performance critical controllers for these systems is still relatively new and thus offers an open area for further investigation. In recent years, control barrier functions (CBFs) have emerged as a popular tool for constructing safety-critical controllers that provide rigorous safety guarantees for nonlinear systems. Initially developed for continuous-time systems, CBFs have also been adapted for sampled-data systems \cite{cortez2019control,gurriet2019realizable,singletary2020control,breeden2021control,niu2021safety}. These adaptations incorporate a margin term into the standard CBF derivative condition to account for potential changes in the dynamics and CBF during the inter-sample period.

 In addition to safety, the ability to reach desired goal sets is also crucial in many practical applications  \cite{fisac2015reach,fan2018controller}. 
Therefore, controllers for such systems should not only prioritize safety, but also be aware of reachability. This is especially crucial for safety-critical and performance-critical systems. These controllers need to adhere to control input constraints and safety requirements while also facilitating the system's achievement of specific states. Existing literature has explored the combination of control barrier functions and control Lyapunov functions to achieve both safety and stability \cite{ames2016control}. Recently, a novel approach called control guidance-barrier functions has been introduced to synthesize continuous-time feedback controllers that enable safe reaching of goal sets \cite{xue2023reach_ode,xue2023reach}. Expanding upon the concept of exponential control guidance-barrier functions introduced in \cite{xue2023reach}, this research aims to extend the analysis to encompass reach-avoid scenarios in sampled-data control systems.


\oomit{b) \textbf{Model Predictive Control (MPC)}MPC and its nonlinear extension, NMPC, are widely used for trajectory tracking \cite{besselmann2012explicit, kohler2018nonlinear, kohler2020nonlinear}. They offer promising control techniques based on online optimization and can incorporate safety through state and input constraints. To ensure recursive feasibility and provide guarantees, appropriate terminal conditions and additional state constraints must be constructed \cite{rawlings2017model}. Extensive research has been conducted in this area, and for further reading, a recent survey provides a comprehensive overview \cite{schwenzer2021review}. For purely discrete-time systems, a combination of control barrier functions with MPC has been proposed for setpoint stabilization \cite{zeng2021safety}. Another approach, presented in \cite{ren2023model}, combines guidance-barrier functions with MPC to safely guide the system into target sets. However, when dealing with sampled-data systems, although we can discretize the time evolution to facilitate computations, these methods do not provide continuous-time guarantees since the constraints are only enforced at sampling instants. MPC tools for the design of linear sampled-data systems are well-developed \cite{sopasakis2013mpc, geromel2021sampled}. However, in the case of nonlinear systems, similar results are still under development to match the usefulness of their linear counterparts. One method to enforce continuous-time constraints is through tube NMPC, where an auxiliary control law maintains disturbances and discretization errors within an invariant tube \cite{kogel2015discrete, schilliger2021control}. In a recent work by \cite{roque2022corridor}, a framework called Corridor MPC is proposed for safe and optimal trajectory tracking by combining MPC and sampled-data CBFs. This framework ensures that the system's state remains within a corridor defined as a permissible error around a reference trajectory, providing safety and robustness.
It is important to note that the MPC scheme presented in this paper differs from the aforementioned approaches. The MPC scheme in this work not only focuses on tracking a reference trajectory but also guarantees the reachability of desired goal sets using measured states. Additionally, the proposed MPC scheme does not rely on state constraints or terminal conditions, which facilitates efficient online optimization.
}




Some basic notions are used throughout this paper: $\mathbb{R}_{\geq 0}$, $\mathbb{R}_{>0}$  and  $\mathbb{R}$ stand for the set of nonnegative reals, positive reals and real numbers, respectively. $\mathbb{N}_{\geq 0}$ denotes the set of non-negative integers. For a set $\mathcal{A}$, $\mathcal{A}^c$, $\overline{\mathcal{A}}$ and $\partial \mathcal{A}$ denote the complement, the closure and the boundary of the set $\mathcal{A}$, respectively. $\wedge$ denotes the logical operation of conjunction. 

\section{Preliminaries}
\label{sec:pre}
In this section we review exponential control guidance-barrier functions, initially proposed for synthesizing continuous-time feedback controllers to enforce reach-avoid objectives in \cite{xue2023reach}.

Consider a control affine system
\begin{equation}
\label{system}
    \dot{\bm{x}}=\bm{F}(\bm{x},\bm{u}):=\bm{f}(\bm{x})+\bm{g}(\bm{x})\bm{u},
\end{equation}
where $\bm{x}\in \mathbb{R}^n$, $\bm{u}\in \mathcal{U}\subseteq \mathbb{R}^m$ with $\mathcal{U}=\{\bm{u}\in \mathbb{R}^m \mid \|\bm{u}\|\leq \overline{u}\}$, and $\bm{f}:\mathbb{R}^n\rightarrow \mathbb{R}^n$ and $\bm{g}: \mathbb{R}^n\rightarrow \mathbb{R}^{n\times m}$ are locally Lipschitz continuous. Given a locally Lispchitz feedback control law $\bm{k}:\mathbb{R}^n\rightarrow \mathcal{U}$ and an initial state $\bm{x}_0\in \mathbb{R}^n$, let $\bm{x}(\cdot;\bm{x}_0): \mathcal{I}\rightarrow \mathbb{R}^n$ be the resulting solution of the closed-loop system \eqref{system} under the control signal $\bm{u}(t):=\bm{k}(\bm{x}(t))$ defined on some maximal interval of existence $\mathcal{I}\subseteq \mathbb{R}_{\geq 0}$.   

Let $h_{\mathcal{C}}(\bm{x}): \mathcal{D}\rightarrow \mathbb{R}$ be a continuously differentiable and bounded function, which defines a safe set $\mathcal{C}$ that system \eqref{system} should satisfy in the following way: 
\begin{equation}
    \mathcal{C}=\{\bm{x}\in \mathcal{D}\mid h_{\mathcal{C}}(\bm{x})>0\},
\end{equation}
where $\overline{\mathcal{C}}\subseteq \mathcal{D}\subseteq \mathbb{R}^n$ and $\partial \mathcal{C}=\{\bm{x}\in \mathcal{D}\mid h_{\mathcal{C}}(\bm{x})=0\}$. In addition, let a goal set $\mathcal{G}\subseteq \mathcal{C}$ be defined by:
 \begin{equation}
     \mathcal{G}=\{\bm{x}\in \mathcal{C}\mid h_{\mathcal{G}}(\bm{x})>0\},
 \end{equation}
 where $h_{\mathcal{G}}(\bm{x}): \mathcal{D}\rightarrow \mathbb{R}$ is a Lispchitz continuous function. 

 Given a locally Lipschitz continuous control law $\bm{k}(\bm{x}):\mathcal{D}\rightarrow \mathcal{U}$, the reach-avoid property with respect to the safe set $\mathcal{C}$ and goal set $\mathcal{G}$ is a property that system \eqref{system} starting from any state in $\mathcal{C}$ can hit the goal set $\mathcal{G}$ eventually while staying inside the set $\mathcal{C}$ before the first goal hitting time. It is formally stated in Definition \ref{r_a_p}. 

 \begin{definition}
 \label{r_a_p}
     Consider system \eqref{system} with $\bm{x}(0)=\bm{x}_0$ under a given locally Lipschitz continuous control policy $\bm{k}(\bm{x}):\mathcal{D}\rightarrow \mathbb{R}^m$ and the maximal interval of existence $\mathcal{I}\subseteq \mathbb{R}_{\geq 0}$ of the solution $\bm{x}(t;\bm{x}_0)$. The system \eqref{system}  satisfies the reach-avoid property with respect to the safe set $\mathcal{C}$ and goal set $\mathcal{G}$ if there exists $\tau\in \mathcal{I}$ such that $\bm{x}(\tau;\bm{x}_0)\in \mathcal{G}$ and $\bm{x}(t;\bm{x}_0)\in \mathcal{C}$ for all $t\in [0,\tau]$, i.e.,
     \[\exists \tau\in \mathbb{R}_{\geq 0}. \bm{x}(\tau;\bm{x}_0)\in \mathcal{G} \wedge \forall t\in [0,\tau]. \bm{x}(\tau;\bm{x}_0)\in \mathcal{C}.\]
 \end{definition}

If $h_{\mathcal{C}}(\bm{x}): \mathcal{D}\rightarrow \mathbb{R}$ is an exponential control guidance-barrier function for system \eqref{system} with the controller $\bm{k}(\bm{x})$, then the closed-loop system \eqref{system} satisfies the reach-avoid property with respect to  the safe set $\mathcal{C}$ and goal set $\mathcal{G}$ \cite{xue2023reach}. 

 \begin{definition}
 \label{egbf_de}
  Consider system \eqref{system} with a locally Lipschitz control law $\bm{k}(\bm{x}):\mathcal{D}\rightarrow \mathbb{R}^m$, 
  $h_{\mathcal{C}}(\bm{x}): \mathcal{D}\rightarrow \mathbb{R}$ is called an exponential control guidance-barrier function if there exists a positive value $\lambda\in \mathbb{R}_{>0}$ satisfying
  \begin{equation}
  \label{egbfc}
      \mathcal{L}_{\bm{f}}h_{\mathcal{C}}(\bm{x})+\mathcal{L}_{\bm{g}}h_{\mathcal{C}}(\bm{x})\bm{k}(\bm{x})-\lambda h_{\mathcal{C}}(\bm{x})\geq 0, \forall \bm{x}\in \mathcal{C}\setminus \mathcal{G},
  \end{equation}
  where $\mathcal{L}_{\bm{f}}h_{\mathcal{C}}(\bm{x})=\frac{\partial h_{\mathcal{C}}(\bm{x})}{\partial \bm{x}} \bm{f}(\bm{x})$ and $\mathcal{L}_{\bm{g}}h_{\mathcal{C}}(\bm{x})=\frac{\partial h_{\mathcal{C}}(\bm{x})}{\partial \bm{x}} \bm{g}(\bm{x})$. 
 \end{definition}


 \begin{theorem}
 \label{r_a_c}
     If  $h_{\mathcal{C}}(\bm{x}): \mathcal{D}\rightarrow \mathbb{R}$ is an exponential control guidance-barrier function with respect to the locally Lipschitz control law $\bm{k}(\bm{x})$, then system \eqref{system} with the control law  $\bm{k}(\bm{x})$ satisfies the reach-avoid property with respect to the safe set $\mathcal{C}$ and goal set $\mathcal{G}$.
 \end{theorem}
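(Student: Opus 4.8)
The plan is to monitor the value of the barrier function $h_{\mathcal{C}}$ along a closed-loop trajectory and to convert condition \eqref{egbfc} into a scalar differential inequality. Fix $\bm{x}_0\in\mathcal{C}$, write $\bm{x}(t):=\bm{x}(t;\bm{x}_0)$ for the solution under $\bm{u}(t)=\bm{k}(\bm{x}(t))$, and set $H(t):=h_{\mathcal{C}}(\bm{x}(t))$. Since $h_{\mathcal{C}}$ is continuously differentiable, $H$ is differentiable along the trajectory with $\dot H(t)=\mathcal{L}_{\bm{f}}h_{\mathcal{C}}(\bm{x}(t))+\mathcal{L}_{\bm{g}}h_{\mathcal{C}}(\bm{x}(t))\bm{k}(\bm{x}(t))$. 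I would then introduce the first exit time $\tau:=\inf\{t\in\mathcal{I}\mid \bm{x}(t)\notin\mathcal{C}\setminus\mathcal{G}\}$ from the region in which \eqref{egbfc} is assumed to hold, with the convention that $\tau=\sup\mathcal{I}$ if no such time exists.

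On $[0,\tau)$ the trajectory stays in $\mathcal{C}\setminus\mathcal{G}$, so \eqref{egbfc} applies pointwise and gives $\dot H(t)\geq\lambda H(t)$. Examining $\frac{d}{dt}\bigl(e^{-\lambda t}H(t)\bigr)\geq 0$ (equivalently, the comparison lemma), I would conclude $H(t)\geq H(0)e^{\lambda t}$ for all $t\in[0,\tau)$. Because $\bm{x}_0\in\mathcal{C}$ yields $H(0)=h_{\mathcal{C}}(\bm{x}_0)>0$, this lower bound forces $H(t)>0$, hence $\bm{x}(t)\in\mathcal{C}$, throughout $[0,\tau)$, which already secures the safety half of the reach-avoid property on that interval.

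The next step is to show that $\tau$ is finite and is realized by entering $\mathcal{G}$ rather than by leaving $\mathcal{C}$. Here I would invoke the assumed boundedness of $h_{\mathcal{C}}$: if $M$ is an upper bound on $\mathcal{D}$, then $H(0)e^{\lambda t}\leq H(t)\leq M$ forces $t\leq\frac{1}{\lambda}\ln\!\bigl(M/H(0)\bigr)$, so the trajectory cannot remain in $\mathcal{C}\setminus\mathcal{G}$ beyond this horizon and $\tau$ is finite. By continuity $H(\tau)\geq H(0)e^{\lambda\tau}>0$, so $\bm{x}(\tau)\notin\partial\mathcal{C}$; since $\bm{x}(\tau)$ lies on the boundary of $\mathcal{C}\setminus\mathcal{G}$ but not on $\partial\mathcal{C}$, it must satisfy $h_{\mathcal{G}}(\bm{x}(\tau))=0$. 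Thus the trajectory reaches $\overline{\mathcal{G}}$ at time $\tau$ while every $\bm{x}(t)$, $t\in[0,\tau]$, lies in $\mathcal{C}$, and combining the two halves delivers Definition \ref{r_a_p}.

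I expect the main obstacle to be the boundary bookkeeping at the goal set: $\mathcal{G}$ is \emph{open} (defined by $h_{\mathcal{G}}>0$), yet a continuous trajectory first contacts it at a point of $\partial\mathcal{G}$ where $h_{\mathcal{G}}=0$, so the argument directly delivers arrival at $\overline{\mathcal{G}}$ rather than at $\mathcal{G}$ itself. Making Definition \ref{r_a_p} come out verbatim will require either interpreting the goal-hitting event up to closure or imposing a mild regularity condition guaranteeing the trajectory crosses strictly into the interior of $\mathcal{G}$. A secondary technical point I would verify is that the solution is defined up to $\tau$; this follows from standard local-Lipschitz existence theory, using the positivity of $H$ to keep $\bm{x}(t)$ inside $\mathcal{C}$ on a set where $\bm{f}$, $\bm{g}$, and $\bm{k}$ are well behaved.
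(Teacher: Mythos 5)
You should first note that the paper itself does not prove Theorem \ref{r_a_c}: it is imported from \cite{xue2023reach} without proof, so the comparison here is against the standard argument for such results. That standard argument is exactly the one you follow: monitor $H(t)=h_{\mathcal{C}}(\bm{x}(t;\bm{x}_0))$, use \eqref{egbfc} to get the scalar inequality $\dot H\geq\lambda H$ while the trajectory stays in $\mathcal{C}\setminus\mathcal{G}$, invoke the comparison lemma to obtain $H(t)\geq H(0)e^{\lambda t}$, use the assumed boundedness of $h_{\mathcal{C}}$ to force the exit time $\tau$ to be finite, and use $H(\tau)\geq H(0)e^{\lambda\tau}>0$ to rule out exit through $\partial\mathcal{C}$. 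All of these steps are correct.

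The genuine problem is your endgame: as written, your proof only delivers arrival in $\overline{\mathcal{G}}$, and you then (incorrectly) assert that the theorem requires reinterpreting the goal-hitting event ``up to closure'' or adding a regularity hypothesis. Neither is needed; the gap closes using only the openness of $\mathcal{C}$ and $\mathcal{G}$ together with the infimum definition of $\tau$. Concretely: since $\bm{x}(\tau)$ is a limit of points of $\mathcal{C}\setminus\mathcal{G}$ and $\mathcal{G}$ is open, $\bm{x}(\tau)\notin\mathcal{G}$; combined with $h_{\mathcal{C}}(\bm{x}(\tau))>0$ and $\bm{x}(\tau)\in\overline{\mathcal{C}}\subseteq\mathcal{D}$, this gives $\bm{x}(\tau)\in\mathcal{C}\setminus\mathcal{G}$, so $\tau$ itself is \emph{not} in the set $\{t\mid \bm{x}(t)\notin\mathcal{C}\setminus\mathcal{G}\}$ whose infimum it is. Hence there exist times $t_n\downarrow\tau$ with $\bm{x}(t_n)\notin\mathcal{C}\setminus\mathcal{G}$. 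Because $\mathcal{C}$ is open and the solution is continuous, there is $\delta>0$ with $\bm{x}(t)\in\mathcal{C}$ for all $t\in[\tau,\tau+\delta]$; taking $t_n\leq\tau+\delta$ yields $\bm{x}(t_n)\in\mathcal{C}$ and $\bm{x}(t_n)\notin\mathcal{C}\setminus\mathcal{G}$, i.e., $\bm{x}(t_n)\in\mathcal{G}$, while $\bm{x}(t)\in\mathcal{C}$ for every $t\in[0,t_n]$. This is Definition \ref{r_a_p} verbatim, with hitting time $t_n$, and no extra assumption. A secondary weakness: your justification that the solution exists up to $\tau$ (``positivity of $H$ keeps the state in $\mathcal{C}$'') is not by itself sufficient when $\mathcal{C}$ is unbounded, since the trajectory could escape to infinity in finite time while remaining in $\mathcal{C}$; one needs either compactness of $\overline{\mathcal{C}}$ or a bound on the closed-loop vector field such as Assumption \ref{assumption}(1) to rule this out.
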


\oomit{\subsection{Model Predictive Control}

A common approach to handle state and control constraints while minimizing the energy consumption is nonlinear model predictive control (NMPC). NMPC is a finite-horizon optimal controller that minimizes a cost function $J(\bm{x},\bm{u})$ along a receding horizon of length $N$ under state and control constraints. The optimization problem is constrained by $\bm{x}\in \mathcal{C}, \bm{u}\in \mathcal{U}$ and the dynamics in the form of 
$\bm{x}(t+1)=\bm{F}_d(\bm{x}(t),\bm{u}(t))$, which is generally the discrete-time approximation of system \eqref{system}. The optimization problem results in $N$ predicted states and $N$ control inputs for the system, of the form $\bm{x}_{t}^*=\{\bm{x}_{1\mid t},\ldots,\bm{x}_{N\mid t}\}$, and $\bm{u}^*_t=\{\bm{u}_{0\mid  t},\ldots,\bm{u}_{(N-1)\mid t}\}$ for a given initial state $\bm{x}_{0\mid t}=\hat{\bm{x}}(t)$, and an associated optimal cost value $J^*(\hat{\bm{x}}(t))$, where $\bm{x}_{i\mid t}=\bm{x}(i\Delta\mid t)$ is the state predicted $i \Delta$ time steps ahead, computed at time $t$, initialized at $\bm{x}_{0\mid t}=\hat{\bm{x}}(t)$, and similarly for $\bm{u}_{i\mid t}$. Each discrete control input is applied to system \eqref{system} in a Zero Order Hold (ZOH) fashion - a piecewise constant input between sampling instances, that is $\bm{u}(\tau)=\bm{u}^*_{0\mid t}, \forall \tau \in [t, t+\Delta)$. Formally, the NMPC problem is defined as  
\begin{equation}
    \begin{split}
        &J^*(\hat{\bm{x}}(t))=\min_{\bm{u}_t^*} J(\bm{x}_{1:N\mid t},\bm{u}_{0:N-1\mid t})\\
        s.t.&~~\bm{x}_{(i+1)\mid t}=\bm{F}_d(\bm{x}_{i\mid t},\bm{u}_{i\mid t}),\forall i\in \mathbb{N}_{[0,N]},\\
        &\bm{x}_{i\mid t}\in \mathcal{C}, \forall i\in \mathbb{N}_{[0,N]},\\
        &\bm{u}_{i\mid t}\in \mathcal{U}, \forall i\in \mathbb{N}_{[0,N-1]},\\ 
        &\bm{x}_{0\mid t}=\hat{\bm{x}}(t)
        \end{split}
\end{equation}
When solved at each sampling time $t$, we obtain a feedback controller $K_{N}(\hat{\bm{x}}(t))=\bm{u}^*_{0\mid t}$.
}

\section{Problem Formulation}
\label{sec:pf}

In this section we formulate the reach-avoid problem of interest for sampled-data systems subject to state measurement errors.  

For a sampled-data system, the sampling instants are described by a sequence of strictly increasing positive real numbers $\{t_i\}, i\in \mathbb{N}_{\geq 0}$, where $t_0=0, t_{i+1}-t_{i}>0$ and $\lim_{i\rightarrow \infty}t_i=\infty$. The system's states are observable only at these sampling instants. Define the sampling interval between $t_i$ and $t_{i+1}$ as 
\[\Delta_i=t_{i+1}-t_{i}.\]
The sampling mechanism is called a periodic if $\Delta_i$ are the same for all $i$, and an aperiodic sampling otherwise. In this work we consider the former with $\Delta_i=\Delta$ for $i\in \mathbb{N}$, i.e., the sampling mechanism is periodic. The control input for a sampled-data system is a piecewise constant signal with respect to $t_i$, i.e., 
\[\bm{u}(t)=\bm{u}_i, \forall t\in [t_i,t_{i+1}).\]
Consequently, when the state of system \eqref{system} is $\bm{x}_i$ at $t=t_i$, the dynamics of system \eqref{system} in the sampled-data form over $[t_i,t_{i+1})$ are governed by  
\begin{equation}
\label{sc_sys}
    \dot{\bm{x}}=\bm{f}(\bm{x})+\bm{g}(\bm{x})\bm{u}_i, \bm{x}(t_i)=\bm{x}_i, \forall t\in [t_i,t_{i+1}).
\end{equation}
We denote the solution to \eqref{sc_sys} as $\bm{x}_{s}(\tau-t_i;\bm{x}_i)$ for $\tau\in [t_i,t_{i+1})$, where $\bm{x}_{s}(0;\bm{x}_i)=\bm{x}_i$. 

Given system \eqref{system} with a locally Lipschitz continuous controller $\bm{k}(\bm{x}): \mathcal{D}\rightarrow \mathcal{U}$, we in this paper consider its corresponding sampled-data system taking the control input $\bm{u}(t)$ in a
zero-order hold (ZOH) manner,  
\begin{equation}
\label{sc}
    \bm{u}(t)=\bm{k}(\bm{x}_i), \forall t\in [t_i,t_{i+1}).
\end{equation}
where $\bm{x}_i=\bm{x}(t_i)$ is the system's state at time $t=t_i$. Also, due to the measurement error, the evolution of the resulting sampled-data system is actually governed by 
\begin{equation}
\label{sc_system_k}
    \dot{\bm{x}}=\bm{f}(\bm{x})+\bm{g}(\bm{x})\bm{k}(\hat{\bm{x}}_i), \bm{x}(t_i)=\bm{x}_i, \forall t\in [t_i,t_{i+1}),
\end{equation}
where $\hat{\bm{x}}_i$ is the measured state at time $t=t_i$,  $\|\hat{\bm{x}}_i-\bm{x}_i\|\leq \epsilon$ with $\bm{x}_i$ being the exact state at time $t=t_i$ which is unknown in practice, and $\epsilon$ is the bound of the measurement error. 

We denote the solution to \eqref{sc_system_k} as $\bm{x}_{s,k}(\tau-t_i;\bm{x}_i)$ for $\tau\in [t_i,t_{i+1})$, where $\bm{x}_{s,k}(0;\bm{x}_i)=\bm{x}_i$. 


According to Theorem \ref{r_a_c}, a Lipschitz continuous-time controller $\bm{k}(\cdot): \mathcal{D}\rightarrow \mathcal{U}$ that satisfies constraint \eqref{egbfc} can safely guide the system \eqref{system} from any state in $\mathcal{C}$ to the goal set $\mathcal{G}$. However, this reach-avoid property may not hold for the sampled-data system \eqref{sc_system_k} due to the discrepancy between this continuous-time controller and its sampled-time implementation. Therefore, it is crucial to improve the performance of the controller $\bm{k}(\cdot): \mathcal{D}\rightarrow \mathcal{U}$ to ensure that the sampled-data system \eqref{sc_system_k} satisfies this reach-avoid property. Furthermore, since the states of the sampled-data system \eqref{sc_system_k} are only observable at sampling time instants, it is only possible to determine successful entry into the goal set $\mathcal{G}$ at these specific time points. Therefore, it is crucial to improve the controller's performance in a manner that enables the determination of goal reach for the sampled-data system \eqref{sc_system_k} based on measured states at sampling instants.

Based on the considerations outlined above, we finalize our reach-avoid problem of interest as Problem \ref{reach_avoid_set_sampled}.

\begin{problem}[Reach-avoid Satisfaction of System \eqref{sc_system_k}]
\label{reach_avoid_set_sampled}
We propose a modified condition, derived from the one \eqref{egbfc}. If a locally Lipschitz continuous controller $\bm{k}(\cdot): \mathcal{D}\rightarrow \mathcal{U}$ satisfies this condition, the sampled-data system \eqref{sc_system_k} will, starting from any state in $\mathcal{C}$, reach the goal set $\mathcal{G}$ at a sampling time instant $t=t_i$. Additionally, it will remain within the safe set $\mathcal{C}$ before $t_i$, and the confirmation of goal reach will be based on the measured state $\hat{\bm{x}}_i$.
\end{problem}


\oomit{Trajectory tracking is a critical aspect in various applications, along with reach-avoid objectives. Model Predictive Control (MPC) is a widely used advanced control method for online trajectory tracking, achieved through solving optimization problems. In this study, we introduce a novel MPC scheme called Reach-avoid MPC. This scheme is designed to develop a sampled-data controller for the system \eqref{sc_sys}, which starts from an initial state $\bm{x}_0\in \widehat{\mathcal{C}}$. The controller ensures the safe reaching of the goal set $\mathcal{G}$ while following a specified trajectory. The safety of reaching the goal set is verified by measuring the states at sampling instances. We formalize our second reach-avoid problem in Problem \ref{MPC_SDS}.

\begin{problem}[Reach-avoid MPC for Trajectory Tracking]
\label{MPC_SDS}
Given a discrete-time reference trajectory $\{\bm{x}^{r}(t_i) \in \mathbb{R}^n\}_{i\in \mathbb{N}}$, when the sampled-data system \eqref{sc_sys} starts from an initial state in $\widehat{\mathcal{C}}$, we will design a MPC scheme to generate a sampled-data controller such that system \eqref{sc_sys}  satisfies the reach-avoid property with respect to the safe set $\mathcal{C}$ and goal set $\mathcal{G}$, while tracking the given reference trajectory. The goal reach at $t=t_i$ can be confirmed using  measured states.  $\square$
\end{problem}
}

In order to solve Problem \ref{reach_avoid_set_sampled}, we need the following assumptions. 
\begin{assumption}
 \label{assumption}
 \begin{enumerate}
     \item  $\sup_{\bm{x}\in \mathcal{D},\bm{u}\in \mathcal{U}} \|\bm{f}(\bm{x})+\bm{g}(\bm{x})\bm{u}\|\leq \alpha$;
     \item $\|\bm{k}(\bm{x})-\bm{k}(\bm{y})\|\leq \beta\|\bm{x}-\bm{y}\|, \forall \bm{x},\bm{y}\in \mathcal{D}$;
     \item $\sup_{\bm{x}\in \mathcal{D}}\|\frac{\partial h_{\mathcal{C}}(\bm{x})}{\partial \bm{x}}\bm{g}(\bm{x})\|\leq \gamma$;
     \item $\mathcal{C}\bigoplus \mathcal{B}_{\epsilon}\subseteq \mathcal{D}$, where $\mathcal{B}_{\epsilon}=\{\bm{y}\in \mathbb{R}^n\mid \|\bm{y}\|\leq \epsilon\}$, $\bigoplus$ denotes the Minkowski sum and $\epsilon$ is the upper bound of the state measurement error.
     \item $\|h_{\mathcal{G}}(\bm{x})-h_{\mathcal{G}}(\bm{y})\|\leq \xi\|\bm{x}-\bm{y}\|, \forall \bm{x},\bm{y}\in \mathcal{D}$.
 \end{enumerate}    
 \end{assumption}
\section{Reach-avoid Analysis}
\label{sec:prob1}


In this section, we will address Problem \ref{reach_avoid_set_sampled}. We aim to establish a sufficient condition for system \eqref{sc_system_k} to satisfy the reach-avoid property with respect to the safe set $\mathcal{C}$ and goal set $\mathcal{G}$. To achieve this, we first analyze the dynamic discrepancies between systems \eqref{system} and \eqref{sc_system_k}, which arise due to the sampled-data implementation of the locally Lipschitz controller $\bm{k}(\cdot): \mathcal{D}\rightarrow \mathcal{U}$. By conducting this analysis, we can assess the reach-avoid property of system \eqref{sc_system_k} by examining system \eqref{system} under certain perturbation inputs.

This section is structured as follows: In Subsection \ref{sec:regbf}, we introduce robust exponential control guidance-barrier functions, which are relevant to the reach-avoid analysis for a perturbed system. The perturbed system is obtained by adding perturbation inputs into system \eqref{system} subject to the locally Lipschitz controller $\bm{k}(\cdot): \mathcal{D}\rightarrow \mathcal{U}$. In Subsection \ref{sec:ddc}, we analyze the dynamic discrepancies between system \eqref{system} with the locally Lipschitz controller $\bm{k}(\cdot): \mathcal{D}\rightarrow \mathcal{U}$ and \eqref{sc_system_k}. In Subsection \ref{sec:scrass}, we establish the sufficient condition for system \eqref{sc_system_k} to satisfy the reach-avoid property, solving Problem \ref{reach_avoid_set_sampled}.

\subsection{Robust Exponential Control Guidance-barrier Functions}
\label{sec:regbf}
In this subsection, we introduce robust exponential control guidance-barrier functions.

The perturbed system, which is obtained by adding perturbation inputs into system \eqref{system} with the locally Lipshcitz controller $\bm{k}(\cdot): \mathcal{D}\rightarrow \mathcal{U}$, is of the following form
\begin{equation}
\label{p_system}
    \dot{\bm{x}}(t)=\overline{\bm{f}}(\bm{x}(t))+\bm{g}(\bm{x}(t))\bm{d}(t), \bm{x}(0)=\bm{x}_0,
\end{equation}
where $\overline{\bm{f}}(\bm{x})=\bm{f}(\bm{x})+\bm{g}(\bm{x})\bm{k}(\bm{x})$ and $\bm{d}(\cdot): \mathbb{R}\rightarrow \mathbb{R}^m$ is the perturbation input satisfying $\|\bm{d}\|\leq \overline{d}$. 

 Robust exponential control guidance-barrier functions are an exponential control guidance-barrier function that takes perturbation inputs into account.

\begin{definition}
    Consider system \eqref{p_system} with a locally Lipschitz control law $\bm{k}(\bm{x}):\mathcal{D}\rightarrow \mathbb{R}^m$, $h_{\mathcal{C}}(\bm{x}): \mathcal{D}\rightarrow \mathbb{R}$ is called a robust exponential guidance-barrier function if there exists a positive value $\lambda\in \mathbb{R}_{>0}$ satisfying
  \begin{equation}
  \label{regbfc}
      \mathcal{L}_{\bm{f}}h_{\mathcal{C}}(\bm{x})+\mathcal{L}_{\bm{g}}h_{\mathcal{C}}(\bm{x})\bm{k}(\bm{x})-\lambda h_{\mathcal{C}}(\bm{x})\geq \gamma \overline{d}, \forall \bm{x}\in \mathcal{C}\setminus \mathcal{G}.
  \end{equation}
\end{definition}
 
  Robust exponential control guidance-barrier functions function guarantee the fulfillment of the reach-avoid property for  system \eqref{p_system} with respect to the safe set $\mathcal{C}$ and goal set $\mathcal{G}$, regardless of any perturbation inputs.
 \begin{theorem}
 \label{perturbed_reach_avoid}
 If $h_{\mathcal{C}}(\bm{x}):\mathcal{D}\rightarrow \mathbb{R}$ is a robust exponential control guidance-barrier function with respect to the locally Lipschitz control law $\bm{k}(\bm{x}):\mathcal{D}\rightarrow \mathcal{U}$, then system \eqref{p_system} with the control law $\bm{k}(\bm{x})$ satisfies the reach-avoid property with respect to the safe set $\mathcal{C}$ and goal set $\mathcal{G}$, regardless of perturbation inputs. 
 \end{theorem}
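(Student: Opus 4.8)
The plan is to show that the robust condition \eqref{regbfc} is engineered precisely so that, along every solution of the perturbed system \eqref{p_system} and for every admissible perturbation $\bm{d}$, the value $h_{\mathcal{C}}$ obeys the same exponential growth inequality $\dot{h}_{\mathcal{C}}\geq\lambda h_{\mathcal{C}}$ that drives the nominal result in Theorem \ref{r_a_c}. The essential idea is therefore to reduce the perturbed problem to the nominal one by absorbing the worst-case perturbation into the extra margin $\gamma\overline{d}$; once the differential inequality is recovered uniformly in $\bm{d}$, the reach-avoid conclusion follows by the same reasoning that underlies Theorem \ref{r_a_c}, which depends on the trajectory only through this inequality.

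First I would differentiate $h_{\mathcal{C}}$ along a solution of \eqref{p_system}. Using $\overline{\bm{f}}(\bm{x})=\bm{f}(\bm{x})+\bm{g}(\bm{x})\bm{k}(\bm{x})$, the chain rule gives
\[
\frac{d}{dt}h_{\mathcal{C}}(\bm{x}(t))=\mathcal{L}_{\bm{f}}h_{\mathcal{C}}(\bm{x})+\mathcal{L}_{\bm{g}}h_{\mathcal{C}}(\bm{x})\bm{k}(\bm{x})+\frac{\partial h_{\mathcal{C}}(\bm{x})}{\partial\bm{x}}\bm{g}(\bm{x})\bm{d}(t).
\]
For $\bm{x}\in\mathcal{C}\setminus\mathcal{G}$ I would lower-bound the first two terms by $\lambda h_{\mathcal{C}}(\bm{x})+\gamma\overline{d}$ via \eqref{regbfc}, and lower-bound the perturbation term by $-\gamma\overline{d}$ using Cauchy--Schwarz together with part~3 of Assumption \ref{assumption} ($\|\frac{\partial h_{\mathcal{C}}(\bm{x})}{\partial\bm{x}}\bm{g}(\bm{x})\|\leq\gamma$) and $\|\bm{d}\|\leq\overline{d}$. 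Adding these makes the two $\gamma\overline{d}$ contributions cancel, yielding $\frac{d}{dt}h_{\mathcal{C}}(\bm{x}(t))\geq\lambda h_{\mathcal{C}}(\bm{x}(t))$ whenever $\bm{x}(t)\in\mathcal{C}\setminus\mathcal{G}$, uniformly over all admissible $\bm{d}$.

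With this inequality in hand, the reach-avoid property follows in two steps. For safety, since $h_{\mathcal{C}}(\bm{x}_0)>0$ for $\bm{x}_0\in\mathcal{C}$, the comparison (Gr\"onwall) lemma gives $h_{\mathcal{C}}(\bm{x}(t))\geq h_{\mathcal{C}}(\bm{x}_0)e^{\lambda t}>0$ for as long as the trajectory remains in $\mathcal{C}\setminus\mathcal{G}$, so it cannot cross $\partial\mathcal{C}=\{h_{\mathcal{C}}=0\}$ and therefore stays in $\mathcal{C}$. For reachability, I would let $\tau$ be the first exit time from $\mathcal{C}\setminus\mathcal{G}$ and exploit the boundedness of $h_{\mathcal{C}}$: the lower bound $h_{\mathcal{C}}(\bm{x}_0)e^{\lambda t}$ grows without bound, so the trajectory cannot remain in $\mathcal{C}\setminus\mathcal{G}$ indefinitely, forcing $\tau<\infty$; since $h_{\mathcal{C}}(\bm{x}(\tau))>0$ keeps the state inside $\mathcal{C}$, the exit must occur into $\mathcal{G}$, which is exactly Definition \ref{r_a_p}.

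The hard part will be the well-posedness bookkeeping rather than the estimate itself: I must ensure the solution of \eqref{p_system} actually exists up to the hitting time $\tau$ (ruling out finite escape time, which holds because the state remains in $\overline{\mathcal{C}}$ and the right-hand side is bounded there while $\bm{d}$ is bounded), and I must account for $\bm{d}$ being only measurable, so solutions are understood in the Carath\'eodory sense and the derivative identity holds almost everywhere. A secondary technicality is the behavior at $\partial\mathcal{G}$: the first-exit argument directly gives only $\bm{x}(\tau)\in\overline{\mathcal{G}}$, and, as in Theorem \ref{r_a_c}, one must argue that the strict exponential increase of $h_{\mathcal{C}}$ actually carries the state into the open goal set $\mathcal{G}$ rather than merely onto its boundary.
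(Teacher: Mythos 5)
Your proposal is correct, and its first half coincides exactly with the paper's key step: both proofs absorb the perturbation term $\mathcal{L}_{\bm{g}}h_{\mathcal{C}}(\bm{x})\bm{d}$ into the extra margin $\gamma\overline{d}$ of condition \eqref{regbfc}, using $\|\frac{\partial h_{\mathcal{C}}(\bm{x})}{\partial \bm{x}}\bm{g}(\bm{x})\|\leq\gamma$ from Assumption \ref{assumption} and $\|\bm{d}\|\leq\overline{d}$. Where you diverge is in what you do with the resulting inequality. The paper stops once it has recovered the nominal condition \eqref{egbfc} for the perturbed dynamics --- i.e., it regards $\bm{k}(\bm{x})+\bm{d}$ as a control law satisfying Definition \ref{egbf_de} and invokes Theorem \ref{r_a_c} (the cited result) as a black box to conclude. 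You instead unpack that conclusion: you derive the pointwise differential inequality $\frac{d}{dt}h_{\mathcal{C}}(\bm{x}(t))\geq\lambda h_{\mathcal{C}}(\bm{x}(t))$ on $\mathcal{C}\setminus\mathcal{G}$, obtain safety from the Gr\"onwall bound $h_{\mathcal{C}}(\bm{x}(t))\geq h_{\mathcal{C}}(\bm{x}_0)e^{\lambda t}>0$, and obtain finite-time reaching from the boundedness of $h_{\mathcal{C}}$ against unbounded exponential growth. The paper's route is shorter, but your self-contained route buys something real: the paper's reduction tacitly writes the perturbation as a state feedback $\bm{d}(\bm{x})$ so that Theorem \ref{r_a_c} (stated for locally Lipschitz, time-invariant feedback laws) applies, which glosses over the fact that $\bm{d}(t)$ in \eqref{p_system} is an arbitrary time-varying (merely measurable, bounded) signal; your trajectory-wise argument, together with the Carath\'eodory and finite-escape-time bookkeeping you flag, handles exactly this case without any such identification. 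Your final caveat about the exit landing on $\partial\mathcal{G}$ rather than in the open set $\mathcal{G}$ is also a legitimate technical point that the paper inherits silently from Theorem \ref{r_a_c} and never addresses.
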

 \begin{proof}
  Since $\mathcal{L}_{\bm{f}}h_{\mathcal{C}}(\bm{x})+\mathcal{L}_{\bm{g}}h_{\mathcal{C}}(\bm{x})\bm{k}(\bm{x})-\lambda h_{\mathcal{C}}(\bm{x})\geq \gamma \overline{d}, \forall \bm{x}\in \mathcal{C}\setminus \mathcal{G}$, we have that  for all $\bm{d}(\bm{x})$ satisfying $\|\bm{d}\| \leq \overline{d}$, \[
\begin{split}  
  \mathcal{L}_{\bm{f}}h_{\mathcal{C}}(\bm{x})&+\mathcal{L}_{\bm{g}}h_{\mathcal{C}}(\bm{x})\bm{k}(\bm{x})+\mathcal{L}_{\bm{g}}h_{\mathcal{C}}(\bm{x})\bm{d}(\bm{x})\\
  &-\lambda h_{\mathcal{C}}(\bm{x})\geq \gamma \overline{d}+\mathcal{L}_{\bm{g}}h_{\mathcal{C}}(\bm{x})\bm{d}, \forall \bm{x}\in \mathcal{C}\setminus \mathcal{G}.
  \end{split}
  \]
  Consequently, 
  \[
  \begin{split}
  \mathcal{L}_{\bm{f}}h_{\mathcal{C}}(\bm{x})&+\mathcal{L}_{\bm{g}}h_{\mathcal{C}}(\bm{x})\bm{k}(\bm{x})\\
  &+\mathcal{L}_{\bm{g}}h_{\mathcal{C}}(\bm{x})\bm{d}(\bm{x})-\lambda h_{\mathcal{C}}(\bm{x})\geq 0, \forall \bm{x}\in \mathcal{C}\setminus \mathcal{G}
  \end{split}
  \]
  holds for all $\bm{d}$ satisfying $\|\bm{d}\| \leq \overline{d}$. 
 According to Theorem \ref{r_a_c}, we have the conclusion.
 \end{proof}

\subsection{Dynamic Discrepancy Characterization}
\label{sec:ddc}
In this subsection, we characterize the dynamic discrepancy between system \eqref{system} with the controller $\bm{k}(\bm{x}_{s,k}(t-t_i;\bm{x}_i)):[t_i,t_{i+1})\rightarrow \mathcal{U}$ and \eqref{sc_system_k} over the time interval $[t_i,t_{i+1})$. 

We first estimate the discrepancy between the controller $\bm{k}(\bm{x}_{s,k}(t-t_i;\bm{x}_i)): [t_i,t_{i+1})\rightarrow \mathcal{U}$ and $\bm{k}(\hat{\bm{x}}_i)$ over the time interval $[t_i,t_{i+1})$, where $\bm{x}_{s,k}(t-t_i;\bm{x}_i):[t_i,t_{i+1})$ is the solution to system \eqref{sc_system_k} with $\bm{x}(0;\bm{x}_i)=\bm{x}_i$. 
  \begin{lemma}
  \label{perturbation_bound}
 Let $\bm{x}_{s,k}(t-t_i;\bm{x}_i)\in \mathcal{C}$ for $t\in [t_i,t'_i]$ with $t'_i<t_{i+1}$, then \[\|\bm{k}(\bm{x}_{s,k}(t-t_i;\bm{x}_i))-\bm{k}(\hat{\bm{x}}_i)\|\leq \min\{\beta \epsilon+\beta \alpha \Delta,2\overline{u}\}\] for $t\in [t_i,t'_{i}]$, where $\hat{\bm{x}}_i$ is the measured value of the exact state 
 $\bm{x}_i$ at time $t=t_i$ and $\|\hat{\bm{x}}_i-\bm{x}_i\|\leq \epsilon$. 
  \end{lemma}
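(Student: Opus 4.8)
The plan is to bound the controller discrepancy by bounding the state discrepancy and then invoking the Lipschitz property of $\bm{k}$, together with the trivial bound coming from the fact that both controller values lie in $\mathcal{U}$. First I would write
\[
\|\bm{k}(\bm{x}_{s,k}(t-t_i;\bm{x}_i))-\bm{k}(\hat{\bm{x}}_i)\|\leq \beta \|\bm{x}_{s,k}(t-t_i;\bm{x}_i)-\hat{\bm{x}}_i\|,
\]
using Assumption \ref{assumption}.2, valid since both $\bm{x}_{s,k}(t-t_i;\bm{x}_i)\in \mathcal{C}\subseteq \mathcal{D}$ (for $t\in[t_i,t'_i]$ by hypothesis) and $\hat{\bm{x}}_i\in \mathcal{D}$ (because $\hat{\bm{x}}_i\in \mathcal{C}\bigoplus \mathcal{B}_{\epsilon}\subseteq \mathcal{D}$ by Assumption \ref{assumption}.4).

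Next I would split the state gap through the exact state $\bm{x}_i$ via the triangle inequality,
\[
\|\bm{x}_{s,k}(t-t_i;\bm{x}_i)-\hat{\bm{x}}_i\|\leq \|\bm{x}_{s,k}(t-t_i;\bm{x}_i)-\bm{x}_i\|+\|\bm{x}_i-\hat{\bm{x}}_i\|.
\]
The second term is bounded by $\epsilon$ by the measurement-error hypothesis. For the first term I would bound the distance travelled by the sampled-data trajectory from its initial point $\bm{x}_{s,k}(0;\bm{x}_i)=\bm{x}_i$: since the vector field of \eqref{sc_system_k} is $\bm{f}(\bm{x})+\bm{g}(\bm{x})\bm{k}(\hat{\bm{x}}_i)$ with $\bm{k}(\hat{\bm{x}}_i)\in \mathcal{U}$, its norm is bounded by $\alpha$ on $\mathcal{D}$ by Assumption \ref{assumption}.1. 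Integrating,
\[
\|\bm{x}_{s,k}(t-t_i;\bm{x}_i)-\bm{x}_i\|=\Big\|\int_{t_i}^{t}\big(\bm{f}(\bm{x}_{s,k})+\bm{g}(\bm{x}_{s,k})\bm{k}(\hat{\bm{x}}_i)\big)\,ds\Big\|\leq \alpha (t-t_i)\leq \alpha \Delta,
\]
using $t-t_i\leq t'_i-t_i<t_{i+1}-t_i=\Delta$. Combining the three estimates yields the first candidate $\beta \epsilon+\beta \alpha \Delta$. Finally, since both $\bm{k}(\bm{x}_{s,k}(t-t_i;\bm{x}_i))$ and $\bm{k}(\hat{\bm{x}}_i)$ take values in $\mathcal{U}$, each has norm at most $\overline{u}$, so their difference has norm at most $2\overline{u}$; taking the minimum of the two bounds gives the claim.

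The argument is essentially routine, so there is no deep obstacle, but the step requiring the most care is justifying that the trajectory stays inside the region where the bounds apply. Specifically, to apply Assumption \ref{assumption}.1 over the whole integration interval I need $\bm{x}_{s,k}(s-t_i;\bm{x}_i)\in \mathcal{D}$ for all $s\in[t_i,t]$, which is supplied by the hypothesis $\bm{x}_{s,k}(t-t_i;\bm{x}_i)\in \mathcal{C}$ for $t\in[t_i,t'_i]$ together with $\mathcal{C}\subseteq \mathcal{D}$; and to apply Assumption \ref{assumption}.2 at the endpoints I need both $\hat{\bm{x}}_i$ and the trajectory point in $\mathcal{D}$, which is where Assumption \ref{assumption}.4 is used to place $\hat{\bm{x}}_i$ in $\mathcal{D}$. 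I would state these membership facts explicitly before applying the assumptions so that each inequality is clearly legitimate.
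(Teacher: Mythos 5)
Your proposal is correct and follows essentially the same argument as the paper: both decompose the discrepancy through the exact state $\bm{x}_i$, bound the travelled distance by $\alpha\Delta$ via integrating the dynamics under Assumption \ref{assumption}.1, bound the measurement term by $\beta\epsilon$, and obtain $2\overline{u}$ from $\bm{k}$ mapping into $\mathcal{U}$. The only cosmetic difference is that you apply the Lipschitz bound before the triangle inequality (splitting at the state level) whereas the paper splits at the controller level first; your explicit justification that $\hat{\bm{x}}_i\in\mathcal{D}$ via Assumption \ref{assumption}.4 and that the trajectory stays in $\mathcal{D}$ is a welcome tightening of details the paper leaves implicit.
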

  \begin{proof}
  According to Assumption \ref{assumption}, we have that 
  $\hat{\bm{x}}_i\in \mathcal{D}$. Therefore, for $t\in [t_i,t'_i]$,
  \begin{equation*}
  \begin{split}
 & \|\bm{k}(\bm{x}_{s,k}(t-t_i;\bm{x}_i))-\bm{k}(\hat{\bm{x}}_i)\|\\
 &\leq \|\bm{k}(\bm{x}_{s,k}(t-t_i;\bm{x}_i))-\bm{k}(\bm{x}_i)\|+\|\bm{k}(\bm{x}_i)-\bm{k}(\hat{\bm{x}}_i)\|\\
 &\leq \beta \|\bm{x}_{s,k}(t-t_i;\bm{x}_i)-\bm{x}_i\|+\beta \|\bm{x}_i-\hat{\bm{x}}_i\|\\
 &\leq \beta \int_{0}^{t-t_i} \|\bm{f}(\bm{x}_{s,k}(\tau;\bm{x}_i))+\bm{g}(\bm{x}_{s,k}(\tau;\bm{x}_i))\bm{k}(\hat{\bm{x}}_i)\|d\tau +\beta \epsilon\\
 &\leq \beta \alpha \Delta+\beta \epsilon.
  \end{split}
  \end{equation*}
  In addition, since $\bm{k}(\bm{x}): \mathcal{D}\rightarrow \mathcal{U}$, $\|\bm{k}(\bm{x}_{s,k}(t;\bm{x}_i))-\bm{k}(\hat{\bm{x}}_i)\|\leq 2\overline{u}$ holds.
  Thus, we have the conclusion.
  \end{proof}

 Over the time interval $[t_i,t_{i+1})$, system \eqref{sc_system_k} is equivalent to system \eqref{p_system} subject to the perturbation input $\bm{d}(t)=-\bm{k}(\bm{x}_{s,k}(t-t_i;\bm{x}_j))+\bm{k}(\widehat{\bm{x}}_i)$. Therefore, according to Theorem \ref{perturbed_reach_avoid} and Lemma \ref{perturbation_bound}, if $\overline{d}$ in system \eqref{p_system} is equal to $\min \{\beta \alpha \Delta+\beta \epsilon,2\overline{u}\}$ and the feedback controller $\bm{k}(\bm{x}): \mathbb{R}^n \rightarrow \mathcal{U}$ satisfies \eqref{enhanced}, i.e.,
   \begin{equation}
   \label{enhanced0}
   \begin{split}
   \mathcal{L}_{\bm{f}}&h_{\mathcal{C}}(\bm{x})+\mathcal{L}_{\bm{g}}h_{\mathcal{C}}(\bm{x})\bm{k}(\bm{x})\\
   &-\lambda h_{\mathcal{C}}(\bm{x})\geq \gamma \min \{\beta \alpha \Delta+\beta \epsilon,2\overline{u}\}, \forall \bm{x}\in \mathcal{C}\setminus \mathcal{G},
   \end{split}
    \end{equation}
then the sampled-data system \eqref{sc_system_k} satisfies the reach-avoid property with respect to the safe set $\mathcal{C}$ and goal set $\mathcal{G}$. 

 \begin{lemma}
  \label{lemma:sampled}
   If the controller $\bm{k}(\bm{x}): \mathbb{R}^n \rightarrow \mathcal{U}$ satisfies 
   \eqref{enhanced0}, then the sampled-data system \eqref{sc_system_k} satisfies the reach-avoid property with respect to the safe set $\mathcal{C}$ and goal set $\mathcal{G}$. 
 \end{lemma}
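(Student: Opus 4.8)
The plan is to recognize that, on every sampling subinterval $[t_i,t_{i+1})$, the sampled-data dynamics \eqref{sc_system_k} are a special case of the perturbed dynamics \eqref{p_system}, and then to invoke the robust guarantee of Theorem \ref{perturbed_reach_avoid}. Concretely, writing the right-hand side of \eqref{sc_system_k} as $\bm{f}(\bm{x})+\bm{g}(\bm{x})\bm{k}(\bm{x})+\bm{g}(\bm{x})\big(\bm{k}(\hat{\bm{x}}_i)-\bm{k}(\bm{x})\big)$ and evaluating along its own solution $\bm{x}_{s,k}(\cdot;\bm{x}_i)$, one sees that $\bm{x}_{s,k}$ solves \eqref{p_system} exactly, with perturbation input $\bm{d}(t)=\bm{k}(\hat{\bm{x}}_i)-\bm{k}(\bm{x}_{s,k}(t-t_i;\bm{x}_i))$ on $[t_i,t_{i+1})$. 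This identification is an algebraic identity and requires no smallness hypothesis.

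First I would verify that, while the trajectory stays in $\mathcal{C}$, this $\bm{d}$ is an admissible perturbation for \eqref{p_system} with bound $\overline{d}=\min\{\beta\alpha\Delta+\beta\epsilon,2\overline{u}\}$: this is exactly Lemma \ref{perturbation_bound} applied on each subinterval. Second, I would observe that the hypothesis \eqref{enhanced0} is literally the robust barrier inequality \eqref{regbfc} with $\gamma\overline{d}=\gamma\min\{\beta\alpha\Delta+\beta\epsilon,2\overline{u}\}$ on its right-hand side, so $h_{\mathcal{C}}$ is a robust exponential control guidance-barrier function for $\bm{k}$ with this $\overline{d}$. With both facts in hand, Theorem \ref{perturbed_reach_avoid} yields the reach-avoid property for \eqref{p_system}, and hence for \eqref{sc_system_k}.

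I expect the main obstacle to be the circular dependence between these two verifications: the bound $\|\bm{d}(t)\|\leq\overline{d}$ supplied by Lemma \ref{perturbation_bound} is only available \emph{while} the trajectory remains in $\mathcal{C}$, yet forward invariance of $\mathcal{C}$ is part of what the robust guarantee is meant to conclude \emph{from} that bound. Directly feeding $\bm{d}$ into Theorem \ref{perturbed_reach_avoid}, whose statement quantifies over globally bounded perturbations, is therefore not immediately legitimate. To break the loop I would run a first-exit-time argument in place of a black-box application of the theorem: let $\tau$ be the first time the solution leaves $\mathcal{C}\setminus\mathcal{G}$ (with $\tau=\infty$ if it never does). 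On $[0,\tau)$ the trajectory lies in $\mathcal{C}$, so $\|\bm{d}(t)\|\leq\overline{d}$ there, and using $\|\tfrac{\partial h_{\mathcal{C}}}{\partial\bm{x}}\bm{g}\|\leq\gamma$ (Assumption \ref{assumption}) together with \eqref{enhanced0} the mechanism inside the proof of Theorem \ref{perturbed_reach_avoid} applies pointwise, giving $\frac{d}{dt}h_{\mathcal{C}}(\bm{x}(t))\geq\lambda h_{\mathcal{C}}(\bm{x}(t))$ along the trajectory on $[0,\tau)$.

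Finishing is then routine: since $h_{\mathcal{C}}(\bm{x}(0))>0$ and $h_{\mathcal{C}}$ grows at the exponential rate $\lambda$, $h_{\mathcal{C}}$ stays strictly positive on $[0,\tau)$, so by continuity the solution cannot reach $\partial\mathcal{C}=\{h_{\mathcal{C}}=0\}$ at $\tau$; the only remaining way to exit $\mathcal{C}\setminus\mathcal{G}$ is by entering $\mathcal{G}$. Moreover $\tau$ must be finite, for otherwise $h_{\mathcal{C}}(\bm{x}(t))\geq h_{\mathcal{C}}(\bm{x}(0))e^{\lambda t}$ would violate the boundedness of $h_{\mathcal{C}}$. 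Thus $\bm{x}_{s,k}(\tau;\bm{x}_0)\in\mathcal{G}$ and $\bm{x}_{s,k}(t;\bm{x}_0)\in\mathcal{C}$ for all $t\in[0,\tau]$, which is precisely the reach-avoid property of \eqref{sc_system_k}. An equivalent alternative would be to truncate $\bm{d}$ to a globally $\overline{d}$-bounded input that agrees with it on $\mathcal{C}$, apply Theorem \ref{perturbed_reach_avoid} verbatim, and note that the two trajectories coincide until $\mathcal{G}$ is hit.
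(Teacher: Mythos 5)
Your proposal is correct, and its skeleton coincides with the paper's own proof: both recast the sampled-data system \eqref{sc_system_k} as the perturbed system \eqref{p_system} driven by $\bm{d}(t)=\bm{k}(\hat{\bm{x}}_j)-\bm{k}(\bm{x}_{s,k}(t-t_j;\bm{x}_j))$ on each sampling interval, bound this perturbation via Lemma \ref{perturbation_bound}, recognize \eqref{enhanced0} as the robust condition \eqref{regbfc} with $\overline{d}=\min\{\beta\alpha\Delta+\beta\epsilon,2\overline{u}\}$, and then appeal to the robustness expressed by Theorem \ref{perturbed_reach_avoid}. Where you genuinely depart from the paper is the final step. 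The paper introduces the first exit time $\tau$ from $\mathcal{C}\setminus\mathcal{G}$ but then applies Theorem \ref{perturbed_reach_avoid} essentially as a black box, asserting that the induced $\bm{d}$ is an admissible ($\overline{d}$-bounded) perturbation for all time; as you point out, Lemma \ref{perturbation_bound} only certifies the bound \emph{while} the trajectory remains in $\mathcal{C}$, so this application carries a latent circularity. Your fix --- restricting to $[0,\tau)$, where the bound is legitimately available, and unpacking the barrier mechanism to get $\frac{d}{dt}h_{\mathcal{C}}(\bm{x}(t))\geq\lambda h_{\mathcal{C}}(\bm{x}(t))$ from \eqref{enhanced0} and the bound $\|\frac{\partial h_{\mathcal{C}}(\bm{x})}{\partial \bm{x}}\bm{g}(\bm{x})\|\leq\gamma$ of Assumption \ref{assumption}, whence $\tau<\infty$ by boundedness of $h_{\mathcal{C}}$ and the exit must be into $\mathcal{G}$ by positivity of $h_{\mathcal{C}}$ --- closes that gap, as does the truncation variant you sketch; it is a more rigorous rendering of the same idea rather than a different idea. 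One further point in your favor: the paper's proof twice writes $\max\{\beta\alpha\Delta+\beta\epsilon,2\overline{u}\}$ where $\min$ is required for consistency with Lemma \ref{perturbation_bound} and \eqref{enhanced0} (evidently a typo), whereas your proposal uses the correct $\min$ throughout.
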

 \begin{proof}
     Consider system \eqref{p_system} with $\overline{d}=\max\{\beta \alpha \Delta+\beta \epsilon,2\overline{u}\}$. Theorem \ref{perturbed_reach_avoid} ensures that system \eqref{p_system} satisfies the reach-avoid property with respect to the safe set $\mathcal{C}$ and goal set $\mathcal{G}$, regardless of any perturbation input $\|\bm{d}\|\leq \overline{d}$.

    Given $\bm{x}_0\in \mathcal{C}$, let $\tau$ be the maximal time instant such that system \eqref{sc_system_k} stay inside the set $\mathcal{C}\setminus \mathcal{G}$ over the time interval $[0,\tau)$. In the following we first show that $\tau<\infty$. 

Since $\bm{k}(\hat{\bm{x}}_j)=\bm{k}(\hat{\bm{x}}_j)-\bm{k}(\bm{x}_{s,k}(t-t_j;\bm{x}_j))+\bm{k}(\bm{x}_{s,k}(t-t_j;\bm{x}_j))$ for $t\in [t_j,t_{j+1})$ with $j\in \mathbb{N}_{\geq 0}$, and \[\|\bm{k}(\hat{\bm{x}}_j)-\bm{k}(\bm{x}_{s,k}(t-t_j;\bm{x}_j))\|\leq \max\{\beta \epsilon+\beta \alpha \Delta,2\overline{u}\}\] for $j\in \mathbb{N}_{\geq 0}$, where $\hat{\bm{x}}_j$ is the measured state of system \eqref{sc_system_k} at time $t_j$, according to Lemma \ref{perturbation_bound} we conclude that system \eqref{sc_system_k} is equivalent to system \eqref{p_system} with 
\begin{equation}
\label{per_sa}
\bm{d}(t)=\bm{k}(\hat{\bm{x}}_j)-\bm{k}(\bm{x}_{s,k}(t-t_j;\bm{x}_j))
\end{equation}
for $t\in [t_j,t_{j+1}]$ with $j\in \mathbb{N}_{\geq 0}$.
Thus, we have that 
system \eqref{p_system} with the perturbation input \eqref{per_sa} will leave the set $\mathcal{C}\setminus \mathcal{G}$ and enter the goal set $\mathcal{G}$ in finite time. Therefore, $\tau<\infty$. 
Thus, the sampled-data system \eqref{sc_system_k} with $\bm{x}_0\in \mathcal{C}$ will reach the goal set $\mathcal{G}$ at $t=\tau$ while staying inside the safe set $\mathcal{C}$ before $\tau$. 
 \end{proof}

The observability of the state of system \eqref{sc_system_k} is limited to sampling time instants, meaning that the confirmation of entering the goal set $\mathcal{X}$ can only be obtained through measured states at these specific time points in practice. In the subsequent subsection, we aim to improve constraint \eqref{enhanced0} by refining the goal set, thus ensuring goal attainment through the use of measured states.

\subsection{Sufficient Conditions of Reach-avoid Satisfaction for System \eqref{sc_system_k}}
\label{sec:scrass}
In this subsection we will present our sufficient condition for solving Problem \ref{reach_avoid_set_sampled}. 

A Lipschitz controller, denoted as $k(\bm{x})$, satisfying condition \eqref{enhanced0}, ensures that system \eqref{sc_system_k} will eventually enter the goal set $\mathcal{G}$. However, it is possible for the system to enter and leave the goal set between sampling time instances, potentially causing the event of reaching the goal set to be missed. To address this practical concern, we will define a subset $\widehat{\mathcal{G}}$ of the goal set $\mathcal{G}$. If the measured state at a sampling instance (i.e., $t_i=i\Delta$) of the sampled-data system \eqref{sc_system_k} falls within $\widehat{\mathcal{G}}$, system \eqref{sc_system_k} will definitely enter the goal set $\mathcal{G}$. To better understand the determination of the set $\widehat{\mathcal{G}}$, we can break it down into three steps:
\begin{enumerate}
    \item Find a subset $\mathcal{G}_1$ of the goal set $\mathcal{G}$. If the measured state falls within $\mathcal{G}_1$, then the exact state is in the goal set $\mathcal{G}$.
    \item Determine a subset $\mathcal{G}_2$ of the set $\mathcal{G}$. If the exact state is in $\mathcal{G}_2$, then the measured state belongs to $\mathcal{G}_1$.
    \item Determine a subset $\widehat{\mathcal{G}}$ of the goal set $\mathcal{G}$. If there exists a time $\tau$ in the interval $[t_i, t_{i+1})$ such that $\bm{x}_{s,k}(\tau-t_i;\bm{x}_i) \in \widehat{\mathcal{G}}$, $\bm{x}_i \in \mathcal{G}_2$ holds.
\end{enumerate} 

The subsets $\mathcal{G}_1$, $\mathcal{G}_2$ and $\widehat{\mathcal{G}}$ are respectively formulated in Lemma \ref{G_1}, \ref{G_2} and \ref{reach_target_integer}.  

\begin{lemma}
\label{G_1}
    Let $\hat{\bm{x}}$ be the measured state of the state $\bm{x}$ with $\|\hat{\bm{x}}-\bm{x}\|\leq \epsilon$. Then if $\hat{\bm{x}} $ falls within \[\mathcal{G}_1=\{\bm{y}\in \mathcal{G}\mid h_{\mathcal{G}}(\hat{\bm{y}}) > \xi\epsilon\},\] $\bm{x}\in \mathcal{G}$ holds. 
\end{lemma}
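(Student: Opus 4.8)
The plan is to prove this directly from the Lipschitz continuity of $h_{\mathcal{G}}$ (Assumption \ref{assumption}, item 5) together with the measurement bound $\|\hat{\bm{x}}-\bm{x}\|\leq \epsilon$. Reading off the definition of $\mathcal{G}_1$, the hypothesis $\hat{\bm{x}}\in \mathcal{G}_1$ supplies two facts: $\hat{\bm{x}}\in \mathcal{G}$ (hence $\hat{\bm{x}}\in \mathcal{C}\subseteq \mathcal{D}$) and $h_{\mathcal{G}}(\hat{\bm{x}})>\xi\epsilon$. What must be certified is that the \emph{true} state $\bm{x}$ lies in $\mathcal{G}$, i.e.\ that $\bm{x}\in \mathcal{C}$ and $h_{\mathcal{G}}(\bm{x})>0$.

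First I would check that the Lipschitz estimate is actually applicable at the pair $(\bm{x},\hat{\bm{x}})$. Since $\hat{\bm{x}}\in \mathcal{G}\subseteq \mathcal{C}$ and $\|\bm{x}-\hat{\bm{x}}\|\leq \epsilon$, item 4 of Assumption \ref{assumption} ($\mathcal{C}\bigoplus \mathcal{B}_{\epsilon}\subseteq \mathcal{D}$) forces $\bm{x}\in \mathcal{D}$, so item 5 applies to both arguments. I would then chain the bounds
\[
h_{\mathcal{G}}(\bm{x}) \geq h_{\mathcal{G}}(\hat{\bm{x}}) - \xi\|\bm{x}-\hat{\bm{x}}\| \geq h_{\mathcal{G}}(\hat{\bm{x}}) - \xi\epsilon > 0,
\]
where the final strict inequality is precisely the defining condition $h_{\mathcal{G}}(\hat{\bm{x}})>\xi\epsilon$ of $\mathcal{G}_1$. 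This already yields $h_{\mathcal{G}}(\bm{x})>0$, which is half of membership in $\mathcal{G}$.

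The remaining, and I expect the only genuinely delicate, step is to secure $\bm{x}\in \mathcal{C}$, which is needed because $\mathcal{G}=\{\bm{x}\in \mathcal{C}\mid h_{\mathcal{G}}(\bm{x})>0\}$ is by definition a subset of the safe set. The Lipschitz bound on $h_{\mathcal{G}}$ says nothing about $h_{\mathcal{C}}$, so a priori $\bm{x}$ could sit just outside $\mathcal{C}$ while still satisfying $h_{\mathcal{G}}(\bm{x})>0$. I would close this by appealing to the context in which the lemma is used: at a sampling instant $t_i$ the true state $\bm{x}_i$ has been shown via Lemma \ref{lemma:sampled} to remain inside $\mathcal{C}$ up to the first goal-hitting time, so $\bm{x}\in \mathcal{C}$ holds along the trajectory under consideration. (Equivalently, one may carry $\bm{x}\in \mathcal{C}$ as a standing hypothesis, or assume $\mathcal{G}$ is separated from $\partial\mathcal{C}$ by more than $\epsilon$.) Combining $\bm{x}\in \mathcal{C}$ with the inequality $h_{\mathcal{G}}(\bm{x})>0$ established above gives $\bm{x}\in \mathcal{G}$, completing the argument.
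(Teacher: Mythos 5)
Your proof is correct and follows essentially the same route as the paper's: the paper's entire argument is the Lipschitz chain $|h_{\mathcal{G}}(\hat{\bm{x}})-h_{\mathcal{G}}(\bm{x})|\leq \xi\|\hat{\bm{x}}-\bm{x}\|\leq \xi\epsilon$, from which $h_{\mathcal{G}}(\hat{\bm{x}})>\xi\epsilon$ yields $h_{\mathcal{G}}(\bm{x})>0$. The extra care you take is a genuine refinement rather than a deviation: the paper neither checks that both arguments lie in $\mathcal{D}$ (your use of the Minkowski-sum assumption) nor addresses that $\bm{x}\in\mathcal{G}$ formally also requires $\bm{x}\in\mathcal{C}$, silently identifying $h_{\mathcal{G}}(\bm{x})>0$ with goal membership; your observation that this last step must be supplied by the trajectory context (where the exact state at the sampling instant is already known to lie in $\mathcal{C}$, indeed in $\mathcal{G}_2$, when the lemma is invoked inside Theorem \ref{thm:sampled}) correctly closes a gap the paper's own proof leaves open.
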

\begin{proof}
    Since \[|h_{\mathcal{G}}(\hat{\bm{x}})-h_{\mathcal{G}}(\bm{x})|\leq \xi \|\hat{\bm{x}}-\bm{x}\|\leq \xi\epsilon,\] $h(\bm{x})>0$ holds if  $h_{\mathcal{G}}(\hat{\bm{x}}) > \xi \epsilon$. Thus, we have the conclusion. 
\end{proof}

\begin{lemma}
\label{G_2}
    Let $\hat{\bm{x}}$ be the measured value of the state $\bm{x}$ with $\|\hat{\bm{x}}-\bm{x}\|\leq \epsilon$. Then if $\bm{x}$ falls within  \[\mathcal{G}_2=\{\bm{y}\in \mathcal{G}\mid h_{\mathcal{G}}(\hat{\bm{y}}) > 2\xi\epsilon\},\] $\hat{\bm{x}}\in \mathcal{G}_1$ holds.  
\end{lemma}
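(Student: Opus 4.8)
The plan is to replay the estimate from Lemma~\ref{G_1} almost verbatim, the only substantive input being the Lipschitz continuity of $h_{\mathcal{G}}$ supplied by item~5 of Assumption~\ref{assumption}. The thresholds have been rigged so that the proof is essentially forced: membership in $\mathcal{G}_2$ demands $h_{\mathcal{G}}(\bm{x})>2\xi\epsilon$, which is exactly $\xi\epsilon$ more than the $\xi\epsilon$ required for $\mathcal{G}_1$, and that extra margin is precisely what is consumed when transferring the value of $h_{\mathcal{G}}$ from the exact state $\bm{x}$ to the measured state $\hat{\bm{x}}$.

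First I would write down the Lipschitz bound. Using Assumption~\ref{assumption}(5) together with the measurement bound $\|\hat{\bm{x}}-\bm{x}\|\le\epsilon$,
\[
|h_{\mathcal{G}}(\hat{\bm{x}})-h_{\mathcal{G}}(\bm{x})|\le \xi\|\hat{\bm{x}}-\bm{x}\|\le \xi\epsilon .
\]
This is the sole nontrivial step; in particular it yields the one-sided estimate $h_{\mathcal{G}}(\hat{\bm{x}})\ge h_{\mathcal{G}}(\bm{x})-\xi\epsilon$ that I actually need. Next I would invoke the hypothesis $\bm{x}\in\mathcal{G}_2$, i.e.\ $h_{\mathcal{G}}(\bm{x})>2\xi\epsilon$, and chain it with the previous line:
\[
h_{\mathcal{G}}(\hat{\bm{x}})\ge h_{\mathcal{G}}(\bm{x})-\xi\epsilon > 2\xi\epsilon-\xi\epsilon=\xi\epsilon .
\]
Since $h_{\mathcal{G}}(\hat{\bm{x}})>\xi\epsilon$ is exactly the defining inequality of $\mathcal{G}_1$, this establishes $\hat{\bm{x}}\in\mathcal{G}_1$ and completes the argument.

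There is no genuine obstacle here: the statement is a one-line corollary of Lipschitz continuity, structurally identical to Lemma~\ref{G_1} with every threshold shifted up by $\xi\epsilon$. The only thing that needs care is the bookkeeping of which variable wears the hat---here the hypothesis constrains the \emph{exact} state $\bm{x}$ while the conclusion constrains the \emph{measured} state $\hat{\bm{x}}$, the opposite direction from Lemma~\ref{G_1}---together with the trivial check that $h_{\mathcal{G}}(\hat{\bm{x}})>\xi\epsilon>0$ keeps $\hat{\bm{x}}$ inside $\mathcal{G}$, so that the membership $\hat{\bm{x}}\in\mathcal{G}_1$ is well posed. This lemma is the middle link of the three-step chain $\widehat{\mathcal{G}}\Rightarrow\bm{x}\in\mathcal{G}_2\Rightarrow\hat{\bm{x}}\in\mathcal{G}_1\Rightarrow\bm{x}\in\mathcal{G}$ that underlies the construction of $\widehat{\mathcal{G}}$, so the value of the lemma is organizational rather than technical: it certifies that the slack built into $\mathcal{G}_2$ is enough to survive one application of the measurement error.
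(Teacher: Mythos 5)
Your proposal is correct and matches the paper's own proof essentially verbatim: both apply the Lipschitz bound $|h_{\mathcal{G}}(\hat{\bm{x}})-h_{\mathcal{G}}(\bm{x})|\leq \xi\|\hat{\bm{x}}-\bm{x}\|\leq \xi\epsilon$ from Assumption~\ref{assumption}(5) and then chain it with the hypothesis $h_{\mathcal{G}}(\bm{x})>2\xi\epsilon$ to conclude $h_{\mathcal{G}}(\hat{\bm{x}})>\xi\epsilon$. Your additional bookkeeping remarks (which variable wears the hat, and that the threshold structure forces the argument) go slightly beyond the paper's terse two-line proof but introduce no different technique.
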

\begin{proof}
    Since $|h_{\mathcal{G}}(\hat{\bm{x}})-h_{\mathcal{G}}(\bm{x})|\leq \xi \|\hat{\bm{x}}-\bm{x}\|\leq \xi\epsilon$, we have that $h(\hat{\bm{x}})>\xi \epsilon$ holds if  $h_{\mathcal{G}}(\bm{x}) > 2\xi \epsilon$. Thus, the conclusion holds.
\end{proof}

\begin{lemma}
\label{reach_target_integer}
If $\tau \in [t_i,t_{i+1}]$ with $i\in \mathbb{N}$ is a time instant satisfying \[\bm{x}_{s,k}(\tau-t_i;\bm{x}_i)\in \widehat{\mathcal{G}}=\{\bm{y}\in \mathcal{G}\mid h(\bm{y})>\xi\alpha \Delta+2\xi\epsilon\},\] then $\bm{x}_i\in \mathcal{G}_2$ holds.  
\end{lemma}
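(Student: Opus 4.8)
The plan is to reduce the claim to two estimates: a Lipschitz estimate on $h_{\mathcal{G}}$ and a displacement estimate on the sampled-data trajectory over a single sampling interval. Recalling from Lemma \ref{G_2} that membership in $\mathcal{G}_2$ amounts to the exact-state condition $h_{\mathcal{G}}(\bm{x}_i)>2\xi\epsilon$, the target is to derive this inequality from the hypothesis $h_{\mathcal{G}}\bigl(\bm{x}_{s,k}(\tau-t_i;\bm{x}_i)\bigr)>\xi\alpha\Delta+2\xi\epsilon$. The natural route is to compare the value of $h_{\mathcal{G}}$ at the endpoint $\bm{x}_{s,k}(\tau-t_i;\bm{x}_i)$ with its value at the starting point $\bm{x}_i=\bm{x}_{s,k}(0;\bm{x}_i)$ and to control the difference by the distance the trajectory travels before $\tau$.

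First I would invoke Assumption \ref{assumption}(5) to write $|h_{\mathcal{G}}(\bm{x}_{s,k}(\tau-t_i;\bm{x}_i))-h_{\mathcal{G}}(\bm{x}_i)|\le \xi\,\|\bm{x}_{s,k}(\tau-t_i;\bm{x}_i)-\bm{x}_i\|$. Then I would bound the displacement by expressing the solution in integral form, $\bm{x}_{s,k}(\tau-t_i;\bm{x}_i)-\bm{x}_i=\int_0^{\tau-t_i}\bigl[\bm{f}(\bm{x}_{s,k}(s;\bm{x}_i))+\bm{g}(\bm{x}_{s,k}(s;\bm{x}_i))\bm{k}(\hat{\bm{x}}_i)\bigr]\,ds$, passing the norm inside the integral, and applying Assumption \ref{assumption}(1) together with $\tau-t_i\le t_{i+1}-t_i=\Delta$ to obtain $\|\bm{x}_{s,k}(\tau-t_i;\bm{x}_i)-\bm{x}_i\|\le\alpha\Delta$. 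Combining the two steps yields $|h_{\mathcal{G}}(\bm{x}_{s,k}(\tau-t_i;\bm{x}_i))-h_{\mathcal{G}}(\bm{x}_i)|\le\xi\alpha\Delta$, which is exactly the extra margin baked into the definition of $\widehat{\mathcal{G}}$.

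To finish, I would chain the inequalities: since $\bm{x}_{s,k}(\tau-t_i;\bm{x}_i)\in\widehat{\mathcal{G}}$ gives $h_{\mathcal{G}}(\bm{x}_{s,k}(\tau-t_i;\bm{x}_i))>\xi\alpha\Delta+2\xi\epsilon$, subtracting the discrepancy bound $\xi\alpha\Delta$ leaves $h_{\mathcal{G}}(\bm{x}_i)>2\xi\epsilon>0$. Hence $\bm{x}_i\in\mathcal{G}$, and the strict inequality $h_{\mathcal{G}}(\bm{x}_i)>2\xi\epsilon$ is precisely the condition for $\bm{x}_i\in\mathcal{G}_2$, as required.

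The only delicate point is justifying the uniform bound $\alpha$ in the displacement estimate: Assumption \ref{assumption}(1) bounds $\|\bm{f}+\bm{g}\bm{u}\|$ by $\alpha$ only over $\mathcal{D}$, so I must ensure the trajectory $\bm{x}_{s,k}(s;\bm{x}_i)$ stays inside $\mathcal{D}$ for every $s\in[0,\tau-t_i]$, not merely at the endpoint. This is where I expect the main (though mild) obstacle to lie; it is resolved by recalling that within the reach-avoid context of this section the sampled-data trajectory remains in $\mathcal{C}\subseteq\mathcal{D}$ throughout the sampling interval up to the goal-hitting time, so the uniform bound applies along the entire integration path and the estimate is legitimate.
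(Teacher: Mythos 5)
Your proposal is correct and follows essentially the same route as the paper's proof: the Lipschitz bound $|h_{\mathcal{G}}(\bm{x}_{s,k}(\tau-t_i;\bm{x}_i))-h_{\mathcal{G}}(\bm{x}_i)|\leq \xi\|\bm{x}_{s,k}(\tau-t_i;\bm{x}_i)-\bm{x}_i\|$ from Assumption \ref{assumption}(5), the displacement bound $\|\bm{x}_{s,k}(\tau-t_i;\bm{x}_i)-\bm{x}_i\|\leq \alpha\Delta$ from Assumption \ref{assumption}(1), and the chaining that strips the $\xi\alpha\Delta$ margin off the $\widehat{\mathcal{G}}$ condition to leave $h_{\mathcal{G}}(\bm{x}_i)>2\xi\epsilon$. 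In fact, your closing remark about justifying the uniform velocity bound $\alpha$ (i.e., that the trajectory must remain in $\mathcal{D}$ over the whole interval, not just at the endpoints) is a point the paper's proof silently glosses over, so your version is, if anything, slightly more careful.
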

\begin{proof}
    Since $|h_{\mathcal{G}}(\bm{x}_{s,k}(\tau-t_i;\bm{x}_i))-h_{\mathcal{G}}(\bm{x}_i)|\leq \xi \|\bm{x}_{s,k}(\tau-t_i;\bm{x}_i)-\bm{x}_i\|$ and $\|\bm{x}_{s,k}(\tau-t_i;\bm{x}_i)-\bm{x}_i\|\leq \alpha (\tau-t_i)\leq \alpha \Delta$, we have \[|h_{\mathcal{G}}(\bm{x}_{s,k}(\tau-t_i;\bm{x}_i))-h_{\mathcal{G}}(\bm{x}_i)|\leq \xi \alpha \Delta.\] Also, since $h_{\mathcal{G}}(\bm{x}_{s,k}(\tau-t_i;\bm{x}_i))>\xi\alpha \Delta+2\xi\epsilon$, we have the conclusion. 
\end{proof}


Now, we have our solution to Problem \ref{reach_avoid_set_sampled}, which is formally stated in Theorem \ref{thm:sampled} . 
  \begin{theorem}
  \label{thm:sampled}
   If the controller $\bm{k}(\bm{x}): \mathbb{R}^n \rightarrow \mathcal{U}$ to system \eqref{system} satisfies   
   \begin{equation}
   \label{enhanced}
   \begin{split}
   \mathcal{L}_{\bm{f}}&h_{\mathcal{C}}(\bm{x})+\mathcal{L}_{\bm{g}}h_{\mathcal{C}}(\bm{x})\bm{k}(\bm{x})\\
   &-\lambda h_{\mathcal{C}}(\bm{x})\geq \gamma \min \{\beta \alpha \Delta+\beta \epsilon,2\overline{u}\}, \forall \bm{x}\in \mathcal{C}\setminus \widehat{\mathcal{G}},
   \end{split}
    \end{equation}
 where \[\widehat{\mathcal{G}}=\{\bm{x}\in \mathcal{C}\mid  h_{\mathcal{G}}(\bm{x})-\xi \alpha \Delta-2\xi\epsilon>0\} \neq \emptyset,\] then the sampled-data system \eqref{sc_system_k} satisfies the reach-avoid property with respect to the safe set $\mathcal{C}$ and goal set $\mathcal{G}$. Specially, there exists a sampling time instant $t_i$ such that the sampled-data system \eqref{sc_system_k} is confirmed to enter the goal set $\mathcal{G}$ via its measured state $\widehat{\bm{x}}_i$ at $t=t_i$, and it does not leave the safe set $\mathcal{C}$ before $t_i$.  
 \end{theorem}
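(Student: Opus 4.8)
The plan is to treat the refined set $\widehat{\mathcal{G}}$ as an \emph{effective goal set} and reduce the claim to the robust reach-avoid machinery already developed, then convert the continuous-time hitting of $\widehat{\mathcal{G}}$ into a confirmation at a sampling instant. First I would observe that for every $\bm{x}\in\widehat{\mathcal{G}}$ we have $h_{\mathcal{G}}(\bm{x})>\xi\alpha\Delta+2\xi\epsilon\geq 0$, so the (assumed nonempty) set $\widehat{\mathcal{G}}$ is a subset of $\mathcal{G}\subseteq\mathcal{C}$. Consequently, condition \eqref{enhanced} is exactly the robust exponential control guidance-barrier condition \eqref{regbfc} with perturbation bound $\overline{d}=\min\{\beta\alpha\Delta+\beta\epsilon,2\overline{u}\}$, but posed relative to the goal $\widehat{\mathcal{G}}$ in place of $\mathcal{G}$.

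Next I would rerun the argument of Lemma \ref{lemma:sampled} with $\widehat{\mathcal{G}}$ substituted for $\mathcal{G}$ throughout. By Lemma \ref{perturbation_bound}, the sampled-data system \eqref{sc_system_k} coincides on each interval $[t_j,t_{j+1})$ with the perturbed system \eqref{p_system} under the input $\bm{d}(t)=\bm{k}(\hat{\bm{x}}_j)-\bm{k}(\bm{x}_{s,k}(t-t_j;\bm{x}_j))$, whose norm is bounded by $\overline{d}$. Theorem \ref{perturbed_reach_avoid}, applied with goal $\widehat{\mathcal{G}}$, then guarantees that starting from any $\bm{x}_0\in\mathcal{C}$ the trajectory stays in $\mathcal{C}$ and hits $\widehat{\mathcal{G}}$ at some finite first-hitting time $\tau^*$, while remaining in $\mathcal{C}\setminus\widehat{\mathcal{G}}\subseteq\mathcal{C}$ before $\tau^*$.

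The heart of the remaining argument is the passage from this continuous-time hitting time $\tau^*$ to a verifiable event at a sampling instant, which I would carry out with the three-lemma chain. Let $t_i$ be the sampling instant with $\tau^*\in[t_i,t_{i+1})$, so that $\bm{x}_{s,k}(\tau^*-t_i;\bm{x}_i)\in\widehat{\mathcal{G}}$. Lemma \ref{reach_target_integer} then yields $\bm{x}_i\in\mathcal{G}_2$; Lemma \ref{G_2} upgrades this to $\hat{\bm{x}}_i\in\mathcal{G}_1$ for the measured state; and Lemma \ref{G_1} finally gives $\bm{x}_i\in\mathcal{G}$. Thus at $t=t_i$ the exact state already lies in $\mathcal{G}$, and this is certified from data alone because the measured value satisfies $h_{\mathcal{G}}(\hat{\bm{x}}_i)>\xi\epsilon$, that is, $\hat{\bm{x}}_i\in\mathcal{G}_1$. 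Safety before $t_i$ follows since $t_i\leq\tau^*$ and the trajectory stays in $\mathcal{C}$ on $[0,\tau^*]$, giving $\bm{x}(t)\in\mathcal{C}$ for all $t\in[0,t_i]$; together with $\bm{x}_i\in\mathcal{G}$ this is precisely the asserted reach-avoid conclusion with measured-state confirmation.

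I expect the main difficulty to be organizational rather than technical: one must check that the margin budget lines up across the three lemmas, so that $\widehat{\mathcal{G}}$ is simultaneously small enough to force $\hat{\bm{x}}_i\in\mathcal{G}_1$ and yet reachable under \eqref{enhanced}. Concretely, the term $\xi\alpha\Delta$ must absorb the inter-sample drift of $h_{\mathcal{G}}$ (Lemma \ref{reach_target_integer}), while the two $\xi\epsilon$ contributions absorb the measurement error in each direction (Lemmas \ref{G_1} and \ref{G_2}); verifying that the $\xi\alpha\Delta+2\xi\epsilon$ threshold in the definition of $\widehat{\mathcal{G}}$ is exactly what makes these implications compose is the crux. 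A secondary point to treat with care is the boundary case $\tau^*=t_i$ and the fact that $\bm{x}_{s,k}(\cdot;\bm{x}_i)$ is a priori defined only on the half-open interval, which is resolved by continuity of the solution.
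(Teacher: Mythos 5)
Your proposal is correct and follows essentially the same route as the paper's proof: reduce the sampled-data system \eqref{sc_system_k} to the perturbed system \eqref{p_system} via Lemma \ref{perturbation_bound}, invoke the robust reach-avoid guarantee with $\widehat{\mathcal{G}}$ as the effective goal set, and then compose Lemmas \ref{reach_target_integer}, \ref{G_2} and \ref{G_1} to certify entry into $\mathcal{G}$ from the measured state $\widehat{\bm{x}}_i$ at the sampling instant $t_i$. If anything, your write-up is slightly more explicit than the paper's, which merely asserts the existence of a first hitting time of $\widehat{\mathcal{G}}$ without spelling out that this follows from rerunning Lemma \ref{lemma:sampled} with the shrunken goal.
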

 \begin{proof}
Since $\widehat{\mathcal{G}}\subseteq \mathcal{G}$, the conclusion that system \eqref{sc_system_k} with $\bm{x}(0)=\bm{x}_0\in \mathcal{C}$ satisfies the reach-avoid property with respect to the safe set $\mathcal{C}$ and goal set $\mathcal{G}$ can be confirmed by Lemma \ref{lemma:sampled}.

In addition, assume that $\tau \in [t_i,t_{i+1})$ is the first target hitting time of the set $\widehat{\mathcal{G}}$.  According to Lemma \ref{reach_target_integer}, we conclude that the sampled-data system \eqref{sc_system_k} with $\bm{x}_0\in \widehat{\mathcal{C}}$ will reach the set $\mathcal{G}_2$ at $t=t_i$ while staying inside the constraint set $\mathcal{C}$ before $t_i$. Therefore, its measured state $\widehat{\bm{x}}_i$ at $t=t_i$ falls within $\mathcal{G}_1$ from Lemma \ref{G_2}. This measurement assures that the sampled-data system \eqref{sc_system_k} enters the goal set $\mathcal{G}$ according to Lemma \ref{G_1}. 
 \end{proof}

 \begin{remark}
Although we can ensure safety of the system \eqref{sc_system_k} with $\bm{x}(0)=\bm{x}_0\in \mathcal{C}$ before entering the goal set $\mathcal{G}$, we cannot ensure that the measured state will always stay inside the safe set $\mathcal{C}$ before hitting the goal set. 
 \end{remark}

It is worth to note that in the sampled-data system \eqref{sc_system_k}, a high sampling frequency (small sampling period $\Delta$) and small state measurement error result in minimal modifications to constraint \eqref{egbfc} to ensure that the sampled-data implementation of a Lipschitz controller safely drives the system towards the goal set. Specifically, by setting $\Delta=0$ and $\epsilon=0$, constraint \eqref{enhanced} essentially becomes equivalent to \eqref{egbfc}. To explain this effect from a set perspective, let $\hat{\mathcal{C}}=\{\bm{x}\in \mathbb{R}^n \mid h_{\mathcal{C}}(\bm{x})>-\frac{\gamma \min \{\beta \alpha \Delta+\beta \epsilon,2\overline{u}\}}{\lambda}\}\subseteq \mathcal{D}$, and suppose the Lipschitz controller $k(\bm{x})$ satisfies 
\begin{equation*}
      \mathcal{L}_{\bm{f}}h_{\mathcal{C}}(\bm{x})+\mathcal{L}_{\bm{g}}h_{\mathcal{C}}(\bm{x})\bm{k}(\bm{x})-\lambda h_{\mathcal{C}}(\bm{x})\geq 0, \forall \bm{x}\in \mathcal{D}\setminus \mathcal{G}.
  \end{equation*}
In this case, we can ensure that system \eqref{system} with the controller $\bm{f}(\bm{x})$ satisfies the reach-avoid property with respect to the safe set $\mathcal{C}$ and goal set $\mathcal{G}$. However, if the Lipschitz controller $k(\bm{x})$ satisfies 
\begin{equation}
\label{expand}
\begin{split}
      \mathcal{L}_{\bm{f}}&h_{\mathcal{C}}(\bm{x})+\mathcal{L}_{\bm{g}}h_{\mathcal{C}}(\bm{x})\bm{k}(\bm{x})\\
      &-\lambda h_{\mathcal{C}}(\bm{x})\geq  \gamma \min \{\beta \alpha \Delta+\beta \epsilon,2\overline{u}\}, \forall \bm{x}\in \mathcal{D}\setminus \hat{\mathcal{G}},
\end{split}
  \end{equation}
then we can ensure that system \eqref{system} with the controller $\bm{k}(\bm{x})$ will satisfy the reach-avoid property with respect to the expanded safe set $\hat{\mathcal{C}}$ and the shrunk goal set $\hat{\mathcal{G}}$.

\oomit{
 \section{Solving Problem \ref{MPC_SDS}}
 \label{sub:MPC}
In this section we address Problem \ref{MPC_SDS} based on Theorem \ref{perturbed_reach_avoid} and Theorem \ref{thm:sampled} in Section \ref{sec:prob1}.

\begin{assumption}
\label{reference}
Assume that the discreet-time reference trajectory $\{\bm{x}^r(t_i)\}_{i\in \mathbb{N}}$ satisfies the following conditions: there exists $i\geq 0$ such that $\bm{x}^r(t_i)\in \widehat{\mathcal{G}}$ and $\bm{x}^{r}(t_j)\in \widehat{\mathcal{G}}$ for $j\geq i$. Also, the controller $\bm{k}(\cdot): \mathbb{R}^n\rightarrow \mathbb{R}^m$ satisfies 
constraint \eqref{enhanced}. 
\end{assumption}
 
The proposed reach-avoid MPC involves solving the following optimization online,
\begin{equation}
\label{MPC}
    \begin{split}
        &\min_{\{\bm{u}_{i\mid t}\}_{i=0}^{N-1}} \sum_{i=0}^{N}\|\bm{e}_{i\mid t}\|\\
        & \text{s.t.~} \\
      & \bm{x}_{(i+1)\mid t}=\bm{f}_{d}(\bm{x}_{i\mid t})+\bm{g}_{d}(\bm{x}_{i\mid t})\bm{u}_{i\mid t}, \forall i\in \mathbb{N}_{[0,N-1]},\\
        & \|\bm{u}_{i\mid t}-\bm{k}(\bm{x}_{i\mid t})\|\leq \frac{\gamma \min \{\beta \alpha \Delta,2\overline{u}\}}{\lambda}, \forall i \in \mathbb{N}_{[0,N-1]},\\
        &\bm{e}_{i\mid t}=\bm{x}_{i\mid t}-\bm{x}^r(t+i\Delta),\forall i \in \mathbb{N}_{[0,N]},\\
        &\bm{x}_{0\mid t}=\hat{\bm{x}}(t),
    \end{split}
\end{equation}
where $\bm{x}_{i\mid t}=\bm{x}(i\Delta\mid t)$ is the state predicted $i \Delta$ time steps ahead, computed at time $t$, initialized at $\bm{x}_{0\mid t}=\hat{\bm{x}}(t)$ which is the measured state at time $t$, and similarly for $\bm{u}_{i\mid t}$.

Let
\begin{equation}
\label{optimal_solution}
\mathbf{u}_{0: N-1 \mid t}^{*}=\{\bm{u}_{i \mid t}^{*}\}_{i=0}^{N-1} 
\end{equation}
be the optimal solution to \eqref{MPC} at time $t$. Then, at time $t$, the first element $\bm{u}(t)=\bm{u}_{0 \mid t}^{*}$ of $\bm{u}_{0: N-1 \mid t}^{*}$ is applied to system \eqref{sc_sys} and thus the measured state of system \eqref{sc_sys} turns into $\hat{\bm{x}}(t+1)$, which will be the used to update the initial state (i.e., $\bm{x}_{0\mid t+1}=\hat{\bm{x}}(t+1)$) in \eqref{MPC} for subsequent computations. 

 Once there exists a time instant $i\Delta \in [0,\infty)$ such that the measured state $\hat{\bm{x}}(i\Delta)\in \widehat{\mathcal{G}}_1$, then system \eqref{sc_sys} is guaranteed to enter the goal set $\mathcal{G}$, and the computations terminate.



Under Assumption \ref{reference}, we in Theorem \ref{feasibility} show that the MPC \eqref{MPC} is feasible if $\bm{x}(0)\in \widehat{\mathcal{C}}$ and system \eqref{sc_sys} with sampled-data controllers computed by solving \eqref{MPC} satisfies the reach-avoid property via tracking the reference trajectory $\{\bm{x}^r(t_i)\}_{i\in \mathbb{N}}$.



\begin{theorem}
\label{feasibility}
If $\bm{x}_0\in \widehat{\mathcal{C}}$, the MPC \eqref{MPC} is feasible; also, system \eqref{sc_sys} with controllers obtained by solving MPC \eqref{MPC} can reach the target set $\mathcal{G}$ in finite time while satisfying state and input constraints.
\end{theorem}
\begin{proof}
   The feasibility of MPC \eqref{MPC} is obvious, since any controller $\{\bm{u}(t)=\bm{u}_{0\mid i}, t\in [i\Delta, (i+1)\Delta)\}_{i\in \mathbb{N}}$ satisfying $\|\bm{u}_{0\mid i}-\bm{k}(\hat{\bm{x}}(i\Delta))\|\leq \frac{\gamma \min \{\beta \alpha \Delta,2\overline{u}\}}{\lambda}$ is a solution to MPC \eqref{MPC}, which can drive system \eqref{sc_sys} to reach the target set $\mathcal{G}$ safely according to Theorem \ref{thm:sampled}. . 
\end{proof}

\begin{remark}
\textcolor{red}{If there exists $\overline{\theta}$ such that $\bm{x}^r(t) \in \{\bm{x}\in \mathcal{D}\mid h(\bm{x})>\frac{\gamma \overline{\theta}}{\lambda}\}$ for $t\in [0,\infty)$ and $\overline{\theta} >\min \{\beta \alpha\Delta+\beta \epsilon ,2\overline{u}\}$, then the input constraint in MPC \eqref{MPC} can be relaxed further, resulting in $\|\bm{u}_{i\mid t}-\bm{k}(\bm{x}_{i\mid t})\|\leq \frac{\gamma \overline{\theta}}{\lambda}, \forall i \in \mathbb{N}_{[0,N-1]}$. However, the resulting trajectory of system \eqref{sc_sys} will stay inside the state constraint set $\{\bm{x}\in \mathcal{D}\mid h(\bm{x})>\frac{\gamma \overline{\theta}}{\lambda}\}$ before reaching the set $\widehat{\mathcal{G}}$.}

 \end{remark}
 }

\section{Examples}
\label{sec:expe}
In this section, we demonstrate our theoretical developments on two systems, i.e.,  an inverted pendulum system and a cruise control system.

\begin{example}
In this example, we have an inverted pendulum system with the state $\bm{x} = [\theta, \dot{\theta}]^\top$, where $\theta$ represents the pendulum angle and $\dot{\theta}$ represents the angular velocity. The dynamics of the system are described by the following differential equation \cite{alan2023control}:
    \begin{equation*}
        \frac{d}{dt} \left[ \begin{array}{c}
           \theta  \\
           \dot{\theta}\\ 
        \end{array} \right] = \left[\begin{array}{c}
            \dot{\theta}\\
            \frac{g}{l}\sin{\theta}\\ 
        \end{array}\right] + \left[\begin{array}{c}
            0\\
            \frac{1}{ml^2}
        \end{array} \right]u,
    \end{equation*}
    where $m$, $l$ and $g$ represent the pendulum mass, length of the rod, and gravitational acceleration, respectively. The values for  $m$, $l$ and $g$ are respectively $m = 0.04\ [kg]$, $l = 5\ [m]$ and $g=10\ [m/s^2]$. In addition, the associated sets are $\mathcal{D} = \{\bm{x} \in \mathbb{R}^2 \mid 1.1 - 4 \theta^2 - 2\dot{\theta}^2 > 0\}$, $\mathcal{C} = \{\bm{x} \in \mathcal{D} \mid 1 - 4 \theta^2 - 2\dot{\theta}^2 > 0\}$, and $\mathcal{G} = \{\bm{x} \in \mathcal{C} \mid 0.05 - \dot{\theta}^2 > 0\}$ and $\mathcal{U}=[-2.6, 2.6]$.

   These configurations $\Delta = 0.003$, $\epsilon = 0.001$, $\lambda = 0.001$ are used in condition \eqref{thm:sampled}. We can ensure that the following controller \eqref{u2} satisfies constraint \eqref{enhanced}.

    \begin{lemma}
        The controller
    \begin{equation}
    \label{u2}
        k([\theta, \dot{\theta}]^\top) = -2\theta-2\sin{\theta} - 2\Dot{\theta}
    \end{equation}
    satisfies condition \eqref{enhanced}.
    \end{lemma}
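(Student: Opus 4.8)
The plan is to verify condition \eqref{enhanced} by direct substitution, since this lemma is a concrete verification for fixed numerical data rather than an abstract statement. First I would insert the parameter values $g/l = 2$ and $1/(ml^2) = 1$, so that the vector fields reduce to $\bm{f}(\bm{x}) = [\dot{\theta}, 2\sin\theta]^\top$ and $\bm{g}(\bm{x}) = [0,1]^\top$, and record $h_{\mathcal{C}}(\bm{x}) = 1 - 4\theta^2 - 2\dot{\theta}^2$, $h_{\mathcal{G}}(\bm{x}) = 0.05 - \dot{\theta}^2$, and $\overline{u} = 2.6$. Using $\frac{\partial h_{\mathcal{C}}}{\partial \bm{x}} = [-8\theta, -4\dot{\theta}]$, I would then compute $\mathcal{L}_{\bm{f}}h_{\mathcal{C}}(\bm{x}) = -8\theta\dot{\theta} - 8\dot{\theta}\sin\theta$ and $\mathcal{L}_{\bm{g}}h_{\mathcal{C}}(\bm{x}) = -4\dot{\theta}$.

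The central observation is that the controller \eqref{u2} is engineered precisely to annihilate the indefinite cross terms. Substituting $\bm{k} = -2\theta - 2\sin\theta - 2\dot{\theta}$ gives $\mathcal{L}_{\bm{g}}h_{\mathcal{C}}(\bm{x})\,\bm{k}(\bm{x}) = 8\theta\dot{\theta} + 8\dot{\theta}\sin\theta + 8\dot{\theta}^2$, so that $\mathcal{L}_{\bm{f}}h_{\mathcal{C}}(\bm{x}) + \mathcal{L}_{\bm{g}}h_{\mathcal{C}}(\bm{x})\,\bm{k}(\bm{x}) = 8\dot{\theta}^2$. Hence the left-hand side of \eqref{enhanced} collapses to the clean expression $8\dot{\theta}^2 - \lambda(1 - 4\theta^2 - 2\dot{\theta}^2)$, which is dominated by the nonnegative term $8\dot{\theta}^2$.

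Next I would pin down the constants of Assumption \ref{assumption} over $\mathcal{D}$, where $2\dot{\theta}^2 < 1.1$ and $4\theta^2 < 1.1$: namely $\gamma = \sup_{\bm{x}\in\mathcal{D}}\|{-4\dot{\theta}}\| = 4\sqrt{0.55}$, the Lipschitz constant $\beta = \sup_{\bm{x}\in\mathcal{D}}\|\frac{\partial\bm{k}}{\partial\bm{x}}\| = 2\sqrt{5}$ (attained at $\cos\theta = 1$), $\alpha = \sup_{\bm{x}\in\mathcal{D},\bm{u}\in\mathcal{U}}\|[\dot{\theta}, 2\sin\theta + u]^\top\|$, and $\xi = \sup_{\bm{x}\in\mathcal{D}}\|{-2\dot{\theta}}\| = 2\sqrt{0.55}$; with $\Delta = 0.003$ and $\epsilon = 0.001$ these give a right-hand margin $\gamma\min\{\beta\alpha\Delta + \beta\epsilon, 2\overline{u}\} \approx 0.16$. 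I would then translate $\widehat{\mathcal{G}} = \{h_{\mathcal{G}} - \xi\alpha\Delta - 2\xi\epsilon > 0\}$ into the condition $\dot{\theta}^2 < 0.05 - \xi(\alpha\Delta + 2\epsilon)$, so that every point of $\mathcal{C}\setminus\widehat{\mathcal{G}}$ satisfies $\dot{\theta}^2 \geq 0.05 - \xi(\alpha\Delta + 2\epsilon) \approx 0.031$. Feeding this lower bound into $8\dot{\theta}^2$ and using $0 < h_{\mathcal{C}} \leq 1$ on $\mathcal{C}$ (so $-\lambda h_{\mathcal{C}} \geq -\lambda$) bounds the left-hand side below by roughly $0.24$, comfortably exceeding the margin $\approx 0.16$, which closes the argument.

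The main obstacle is not conceptual but arithmetic bookkeeping: each of $\alpha,\beta,\gamma,\xi$ must be computed as a genuine supremum over $\mathcal{D}$ (especially $\alpha$, which couples the nonlinear term $2\sin\theta$ with the full control range $|u|\leq 2.6$), and every constant must be bounded in the conservative direction, since larger constants both enlarge the right-hand margin and shrink $\widehat{\mathcal{G}}$, thereby weakening the available lower bound on $\dot{\theta}^2$. The verification succeeds here only because $\Delta$ and $\epsilon$ are small enough that the margin $\gamma\min\{\beta\alpha\Delta + \beta\epsilon, 2\overline{u}\}$ remains well below $8$ times the smallest admissible value of $\dot{\theta}^2$ on $\mathcal{C}\setminus\widehat{\mathcal{G}}$.
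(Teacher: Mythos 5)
Your proposal is correct and follows essentially the same route as the paper's own proof: exploit the cancellation $\mathcal{L}_{\bm{f}}h_{\mathcal{C}}+\mathcal{L}_{\bm{g}}h_{\mathcal{C}}\bm{k}=8\dot{\theta}^2$, compute the constants $\alpha,\beta,\gamma,\xi$ of Assumption \ref{assumption} (your closed forms $\gamma=4\sqrt{0.55}$, $\beta=2\sqrt{5}$, $\xi=2\sqrt{0.55}$ match the paper's numerical values $2.9665$, $4.4721$, $1.4832$), use the definition of $\widehat{\mathcal{G}}$ to get $\dot{\theta}^2\gtrsim 0.031$ on $\mathcal{C}\setminus\widehat{\mathcal{G}}$, and compare $8\dot{\theta}^2-\lambda h_{\mathcal{C}}>0.247$ against the margin $\gamma\min\{\beta\alpha\Delta+\beta\epsilon,2\overline{u}\}\approx 0.1566$. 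Your write-up is in fact slightly more explicit than the paper's, which states the constants without derivation.
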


    \begin{proof}
   Upon calculations, we can obtain $\alpha=3.6014$, $\beta = 4.4721$, $\gamma = 2.9665$, $\xi = 1.4832$, \[\gamma \min \{\beta \alpha \Delta+\beta \epsilon,2\overline{u}\} < 0.1566,\] \[\{\bm{x}\in\mathcal{C}\mid 0.0311 - \dot{\theta}^2>0\} \subseteq \widehat{\mathcal{G}} \subseteq \{\bm{x}\in\mathcal{C}\mid 0.0310 - \dot{\theta}^2>0\}\] and $\lambda h_c(\bm{x}) < \lambda$. Therefore, we have
    \begin{alignat*}{2}
        &\mathcal{L}_{\bm{f}}h_{\mathcal{C}}(\bm{x})+\mathcal{L}_{\bm{g}}h_{\mathcal{C}}(\bm{x})\bm{k}(\bm{x})-\lambda h_{\mathcal{C}}(\bm{x}) \\
        =&-8\theta\dot{\theta} - 8\dot{\theta}\sin{\theta} + 8\theta\dot{\theta} + 8\dot{\theta}\sin{\theta} + 8\dot{\theta}^2 -\lambda h_{\mathcal{C}}(\bm{x})\\
        =&8\dot{\theta}^2-\lambda h_{\mathcal{C}}(\bm{x}) \\
        >&0.2481 - 0.001 \\
        >&\gamma \min \{\beta \alpha \Delta+\beta \epsilon,2\overline{u}\}, \forall \bm{x}\in \mathcal{C}\setminus \widehat{\mathcal{G}}.
    \end{alignat*}
 Consequently, controller \eqref{u2} satisfies constraint \eqref{enhanced}.
    \end{proof}
    
    

The trajectory of system \eqref{sc_system_k} with the controller \eqref{u2}, subject to a sampling time interval of $\Delta = 0.003$ and a measurement perturbation of $\epsilon = 0.001$, is plotted in Figure \ref{fig:exp_pendulum_contrast} in blue. The initial condition of the system is set to $\bm{x}_0=[-0.4,0.3]^{\top}$. To compute the trajectory between the sampling instants, Runge-Kutta methods are used. The states at the sampling time instants are obtained by perturbing the computed states with a random perturbation of magnitude $\epsilon = 0.001$. Some measured states can be seen in the embedded plot located on the left side of Figure \ref{fig:exp_pendulum_contrast}, which shows a zoomed version of the trajectory close to the set $\mathcal{G}_1$ (according to Lemma \ref{G_1}, if the measured state falls within $\mathcal{G}_1$, system \eqref{sc_system_k} enters the target set $\mathcal{G}$). From the plot, it can be observed that system \eqref{sc_system_k} is able to enter the target set $\mathcal{G}$ safely, and the goal reach can be ensured by using the measured state. The measured state falling with the set $\mathcal{G}_1$ is represented by the magenta asterisk, and the corresponding actual exact state is denoted by the blue asterisk which lies within the goal set $\mathcal{G}$. 

  Figure \ref{fig:exp_pendulum_contrast} also shows another scenario of system \eqref{sc_system_k} with the controller \eqref{u2}. This scenario is generated with a sampling time interval of $\Delta = 0.0003$ and a measurement perturbation of $\epsilon = 0.001$, which also satisfies condition \eqref{enhanced}. The resulting trajectory (yellow curve in Fig. \ref{fig:exp_pendulum_contrast}) aligns with the previous scenario and is indistinguishable from it. The embedded plot on the right side of Figure \ref{fig:exp_pendulum_contrast} zooms in on a portion of the trajectory and the measured states near the set $\mathcal{G}_1$. By comparing these two scenarios, corresponding to $(\Delta, \epsilon) = (0.003,0.001)$ and $(\Delta, \epsilon) = (0.0003,0.001)$, we can conclude that a smaller $\Delta$ results in a smaller deviation of the trajectory from the one induced by the continuous-time controller \eqref{u2}, although the associated measured states may appear more messy. This is because the sampled-data controller resulting from the smaller $\Delta$ better matches the continuous-time controller \eqref{u2}, as visualized in Figure \ref{fig:exp_pendulum_k_contrast}.



    \begin{figure}[htb!]
    \center
     \includegraphics[width=1.0\linewidth,height=0.7\linewidth]{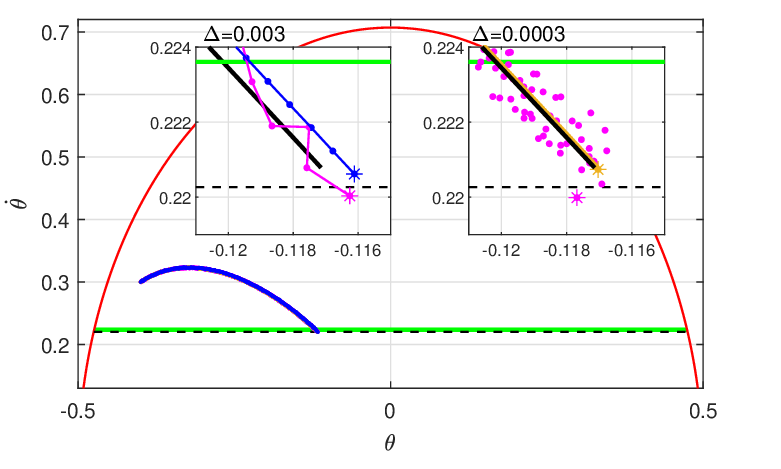}
    \caption{Red and green curve  denotes a part of the boundary $\partial \mathcal{C}$ and $\partial \mathcal{G}$, respectively; dashed black curve denotes a part of the boundary $\partial \mathcal{G}_1$; blue and yellow curve denote the trajectory of system \eqref{sc_system_k} with $(\Delta,\epsilon)=(0.003,0.001)$ and $(\Delta,\epsilon)=(0.0003,0.001)$ starting from the initial state $[-0.4,0.3]^{\top}$, respectively; black curve represents the trajectory of the system without measurement uncertainties under the continuous-time controller \eqref{u2}; magenta points denote measured states. }
    \label{fig:exp_pendulum_contrast}
    \end{figure}

    \begin{figure}[htb!]
    \center
     \includegraphics[width=\linewidth,height=0.9\linewidth]{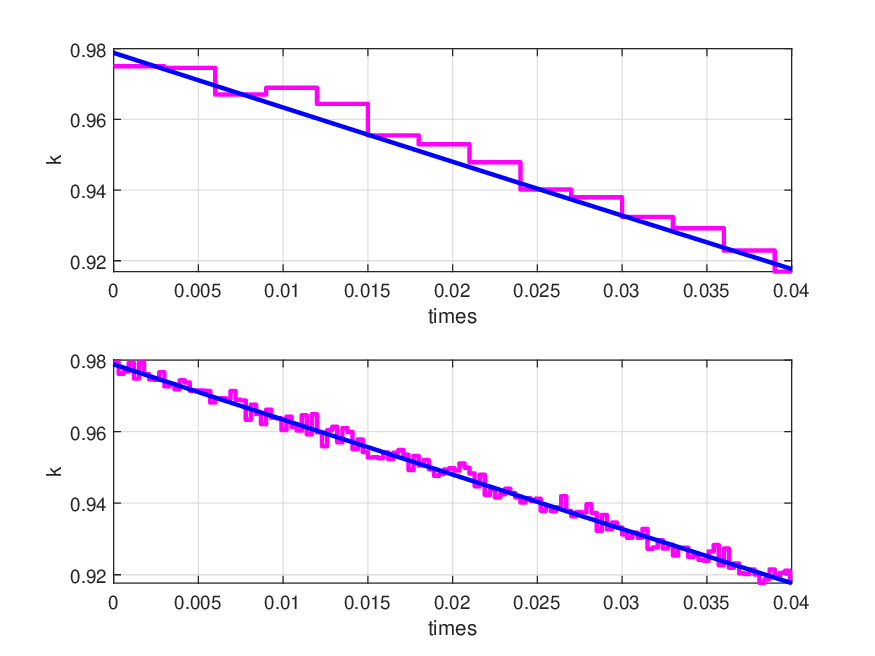}
    \caption{The two figures illustrate the controller \eqref{u2} and its sampled-data implementation for the system starting from the initial state $[-0.4;0.3]$ over the time horizon $[0,0.04]$. The top figure corresponds to the case where $(\Delta,\epsilon)=(0.003,0.001)$, while the bottom figure corresponds to the case where $(\Delta,\epsilon)=(0.0003,0.001)$. Magenta curve represents sampled-data control and blue curve represents the continuous-time controller.}
    \label{fig:exp_pendulum_k_contrast}
    \end{figure}


    
\end{example}

\begin{example}
\label{exp_acc}
Consider a cruise control system describing automatic car-following, in which the primary objective is to decrease the speed of the following vehicle while simultaneously maintaining a certain distance from the leading vehicle.  \oomit{Since the control imposed on the vehicle is discrete, the sensors on the vehicle cannot continuously measure data, and the measured data has errors, the system is actually a sampled-data system with measurement uncertainties.} The dynamics of the system are described as follows \cite{alan2021safe}
\begin{equation*}
    \dot{D} = v_2- v_1, \dot{v}_1 = u, \dot{v}_2 = -v_2,
\end{equation*}
with $\mathcal{D} = \{\bm{x} \in \mathbb{R}^3 \mid 26 - (D - 7)^2 - v_1^2 - v_2^2 > 0\}$, the constraint set $\mathcal{C} = \{\bm{x} \in \mathcal{D} \mid 25 - (D - 7)^2 - v_1^2 - v_2^2 > 0\}$, the goal set $\mathcal{G} = \{\bm{x} \in \mathcal{C} \mid 4 - v_1^2 > 0\}$ and $u\in [-10, 15]$, where $D$, $v_1$, and $v_2$ represent the relative distance between the two vehicles, the speed of the following vehicle and the speed of the leading vehicle, respectively. 

These configurations $\Delta = 0.002$, $\epsilon = 0.01$ and $\lambda = 0.01$ are used in condition \eqref{thm:sampled}. A controller satisfying constraint \eqref{enhanced} is shown below. 

\begin{lemma}
The controller
\begin{equation}  
\label{u1}
    k(\bm{x})=
    \begin{cases}
        &\frac{7v_2-Dv_2+v_2^2}{v_1}+D - v_1 -7,\ \textbf{if}\ v_1^2\geq3\\
        &\frac{7v_2-Dv_2+v_2^2}{\sqrt{3}} +D - \sqrt{3}-7,\ \textbf{if}\ v_1^2<3\\
    \end{cases}
\end{equation}
satisfies condition \eqref{enhanced}.
\end{lemma}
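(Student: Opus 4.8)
The plan is to reduce the verification of \eqref{enhanced} to a single explicit algebraic identity on the region where the first branch of $k$ is active. First I would read off the data of the problem: writing $\bm{x}=[D,v_1,v_2]^\top$, the drift and input vector fields are $\bm{f}(\bm{x})=[\,v_2-v_1,\,0,\,-v_2\,]^\top$ and $\bm{g}(\bm{x})=[\,0,\,1,\,0\,]^\top$, while $h_{\mathcal{C}}(\bm{x})=25-(D-7)^2-v_1^2-v_2^2$ and $h_{\mathcal{G}}(\bm{x})=4-v_1^2$. A direct computation then gives the two Lie derivatives $\mathcal{L}_{\bm{f}}h_{\mathcal{C}}(\bm{x})=-2(D-7)(v_2-v_1)+2v_2^2$ and $\mathcal{L}_{\bm{g}}h_{\mathcal{C}}(\bm{x})=-2v_1$.

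Next I would fix the constants appearing in \eqref{enhanced}. Since $\frac{\partial h_{\mathcal{C}}}{\partial\bm{x}}\bm{g}=-2v_1$ and $\frac{\partial h_{\mathcal{G}}}{\partial\bm{x}}=[0,-2v_1,0]$, both $\gamma$ and $\xi$ equal $\sup_{\mathcal{D}}2|v_1|$; $\alpha$ is the supremum of $\|\bm{f}+\bm{g}u\|$ over $\mathcal{D}\times\mathcal{U}$; and $\beta$ is the Lipschitz constant of $k$. These numbers determine both the right-hand side $\gamma\min\{\beta\alpha\Delta+\beta\epsilon,2\overline{u}\}$ and the threshold $\xi\alpha\Delta+2\xi\epsilon$ that defines $\widehat{\mathcal{G}}=\{\bm{x}\in\mathcal{C}\mid 4-v_1^2>\xi\alpha\Delta+2\xi\epsilon\}$.

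The crucial observation is that on $\mathcal{C}\setminus\widehat{\mathcal{G}}$ only the first branch of $k$ is ever used. Indeed $\mathcal{C}\setminus\widehat{\mathcal{G}}=\{\bm{x}\in\mathcal{C}\mid v_1^2\geq 4-\xi\alpha\Delta-2\xi\epsilon\}$, and as long as the computed constants satisfy $\xi\alpha\Delta+2\xi\epsilon\leq 1$ — which holds comfortably for the small values $\Delta=0.002$, $\epsilon=0.01$ here — this set lies inside $\{v_1^2\geq 3\}$, exactly the region in which $k(\bm{x})=\frac{7v_2-Dv_2+v_2^2}{v_1}+D-v_1-7$. Substituting this branch, the cross terms cancel identically and one is left with the clean identity
\[\mathcal{L}_{\bm{f}}h_{\mathcal{C}}(\bm{x})+\mathcal{L}_{\bm{g}}h_{\mathcal{C}}(\bm{x})k(\bm{x})=2v_1^2.\]
Hence on $\mathcal{C}\setminus\widehat{\mathcal{G}}$ the left-hand side of \eqref{enhanced} equals $2v_1^2-\lambda h_{\mathcal{C}}(\bm{x})$, which is bounded below by $2(4-\xi\alpha\Delta-2\xi\epsilon)-25\lambda$ using $v_1^2\geq 4-\xi\alpha\Delta-2\xi\epsilon$ and $h_{\mathcal{C}}<25$; I would finish by checking numerically that this lower bound exceeds $\gamma\min\{\beta\alpha\Delta+\beta\epsilon,2\overline{u}\}$.

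I expect the main obstacle to be the computation of the Lipschitz constant $\beta$ of the piecewise controller, since the first branch carries a factor $1/v_1$: one must bound its gradient away from $v_1=0$, verify Lipschitz continuity across the switching surface $v_1^2=3$ (where the two branches are engineered to agree), and account for the physical restriction to $v_1>0$. A secondary concern is that the decisive inequality is verified at its tightest point — the inner boundary of $\widehat{\mathcal{G}}$, where $v_1^2$ is smallest — so the margin is governed entirely by the accuracy of the constants $\alpha,\beta,\gamma,\xi$; the second branch of $k$ plays no role in \eqref{enhanced} and serves only to keep the controller well defined inside $\widehat{\mathcal{G}}$.
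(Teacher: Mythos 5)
Your proposal follows essentially the same route as the paper's proof: compute the constants $\alpha,\beta,\gamma,\xi$, observe that $\mathcal{C}\setminus\widehat{\mathcal{G}}$ is contained in $\{v_1^2 \gtrsim 3.44\}$, substitute the controller into the left-hand side to obtain the identity $\mathcal{L}_{\bm{f}}h_{\mathcal{C}}(\bm{x})+\mathcal{L}_{\bm{g}}h_{\mathcal{C}}(\bm{x})\bm{k}(\bm{x})=2v_1^2$, and check numerically that $2v_1^2-\lambda h_{\mathcal{C}}(\bm{x})$ exceeds $\gamma\min\{\beta\alpha\Delta+\beta\epsilon,2\overline{u}\}$ on that set. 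Your one addition --- the explicit justification that only the $v_1^2\geq 3$ branch of $k$ can be active on $\mathcal{C}\setminus\widehat{\mathcal{G}}$, without which the cancellation to $2v_1^2$ would fail --- is a step the paper relies on only implicitly, so it sharpens rather than changes the argument.
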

\begin{proof}
Upon calculation, we obtain $\alpha=17.1193$, $\beta = 12.0717$, $\gamma = 10.1980$, $\xi = 10.1980$. Thus, we can conclude that \[\gamma \min \{\beta \alpha \Delta+\beta \epsilon,2\overline{u}\} < 5.4461\] and \[\{\bm{x}\in\mathcal{C}\mid 3.45 - v_1^2>0\}\subseteq \widehat{\mathcal{G}} \subseteq \{\bm{x}\in\mathcal{C}\mid 3.44 - v_1^2>0\}\] and $\lambda h_c(\bm{x}) \leq 0.25$. Therefore, we have
\begin{equation*}
\begin{split}
    &\mathcal{L}_{\bm{f}}h_{\mathcal{C}}(\bm{x})+\mathcal{L}_{\bm{g}}h_{\mathcal{C}}(\bm{x})\bm{k}(\bm{x})-\lambda h_{\mathcal{C}}(\bm{x}) \\
    =&-2(D-7)(v_2-v_1) - 2v_1u+2 v_2^2- \lambda h_{\mathcal{C}}(\bm{x})\\
    =&2v_1^2 -\lambda h_{\mathcal{C}}(\bm{x})\\
    >&6.8937 - 0.25 \\
    >&\gamma \min \{\beta \alpha \Delta+\beta \epsilon,2\overline{u}\}, \forall \bm{x}\in \mathcal{C}\setminus \widehat{\mathcal{G}}.
\end{split}
\end{equation*}
According to Theorem \ref{thm:sampled}, we have that the controller \eqref{u1} satisfies constraint \eqref{enhanced}.
\end{proof}

Figure \ref{fig:exp_follow_3D} showcases four trajectories of system \eqref{sc_system_k}, along with their corresponding initial states presented in the top left corner. The top right corner of Figure \ref{fig:exp_follow_3D} shows a zoomed version of one trajectory, together with the measured states at sampling time instants. This zoomed plot focuses on the trajectory close to the set $\mathcal{G}_1$. It can be observed that a measured state, represented by the magenta asterisk, successfully falls within the set $\mathcal{G}_1$, of which the boundary is represented by the green region. According to Lemma \ref{G_1}, this ensures that system \eqref{sc_system_k} enters the goal set $\mathcal{G}$, of which the boundary is represented by the blue region. In fact, the corresponding actual exact state, denoted by the blue asterisk, indeed falls within the goal set $\mathcal{G}$. The sampled-data controller for this trajectory is visualized in Fig. \ref{fig:exp_follow_k_new}. A zoomed portion is shown in the right top corner.




\begin{figure}[htb!]
    \center
     \includegraphics[width=\linewidth]{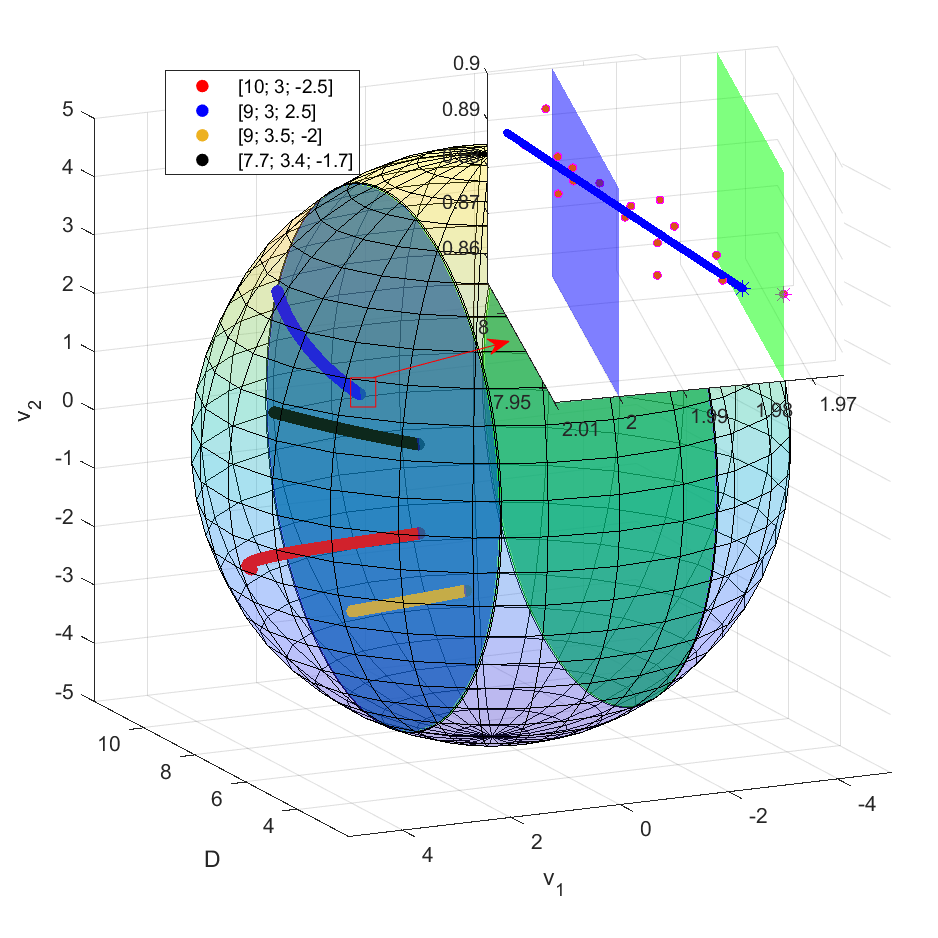}
    \caption{The sphere, blue plane, and green plane represent the boundaries of $\partial{\mathcal{C}}$, $\partial{\mathcal{G}}$, and $\partial{\mathcal{{G}}_1}$, respectively. In the zoomed plot in the right top corner, blue and magenta dots represent exact and measured states, respectively.}
    \label{fig:exp_follow_3D}
\end{figure}


\begin{figure}[htb!]
    \center
     \includegraphics[width=\linewidth,height=0.6\linewidth]{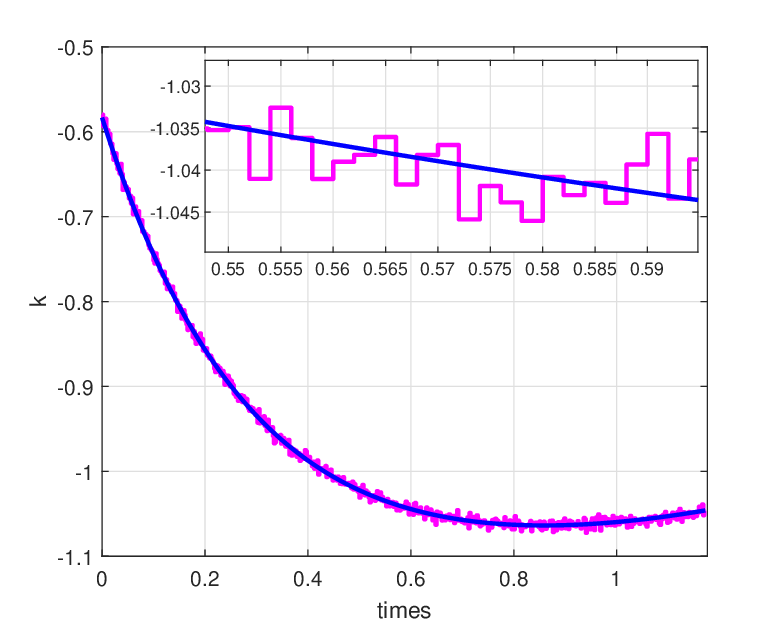}
    \caption{The blue curve represents $\bm{k}(\bm{x})$ and  magenta curve represents $\bm{k}(\widehat{\bm{x}}_i)$, which is the sampled-data controller in system \eqref{sc_system_k}}
    \label{fig:exp_follow_k_new}
\end{figure}



\end{example}

\section{Conclusion}
This paper explored the reach-avoid problem for sampled-data control systems. Specifically, we aim to determine a sufficient condition for the safe guidance of a system into a goal set and the confirmation of goal reach based on measured states at sampling time instants.  By analyzing the dynamic discrepancies between a locally Lipschitz continuous-time controller and its sampled-data implementation, we established a sufficient condition using exponential control guidance-barrier functions. The proposed method was demonstrated through two examples, showcasing the practical application of the theoretical developments. 


\bibliographystyle{abbrv}
\bibliography{ref}
\end{document}